\documentclass[11pt,a4paper]{article}
\usepackage[utf8]{inputenc}
\usepackage[T1]{fontenc}
\usepackage{lmodern}
\usepackage[english]{babel}
\usepackage[margin=2.5cm]{geometry} 
\usepackage{graphicx} 
\usepackage{booktabs} 
\usepackage{natbib} 
\usepackage{amsmath} 
\usepackage{amssymb} 
\usepackage{setspace} 
\usepackage{caption} 
\usepackage[colorlinks=true, citecolor=blue, linkcolor=blue, urlcolor=blue]{hyperref} 

\usepackage{float}
\usepackage[flushleft]{threeparttable}
\usepackage{tabularx}
\usepackage{multirow}
\usepackage{subcaption}
\usepackage[export]{adjustbox}

\usepackage{amsthm}
\usepackage{amsfonts}
\usepackage{mathtools}
\usepackage{bm}
\usepackage{dsfont}
\usepackage{bbm}
\usepackage{mathrsfs}

\usepackage[ruled,vlined]{algorithm2e}

\usepackage{tikz}
\usetikzlibrary{arrows.meta,calc}

\usepackage{appendix}

\allowdisplaybreaks

\DeclareMathOperator{\sgn}{sgn}

\numberwithin{equation}{section}

\theoremstyle{plain}
\newtheorem{thm}{Theorem}[section]
\newtheorem{prop}{Proposition}[section]
\newtheorem{corollary}{Corollary}[thm]
\newtheorem{lemma}{Lemma}[section]

\theoremstyle{definition}

\newtheorem{assump}{Assumption}

\newtheorem{ex}{Example}
\newtheorem{remark}{Remark}[section]

\setcitestyle{round}

\let\oldfootnote\footnote
\renewcommand{\footnote}{\fontsize{9}{11}\selectfont\oldfootnote}

\title{Sign Hacking with Auxiliary Variable Exploration in the Age of Big Data}
\author{
    Jie Hu \\
    Yau Mathematical Sciences Center, Tsinghua University \\
    \and
    Yuhong Yang\thanks{Corresponding author: yyangsc@mail.tsinghua.edu.cn} \\
    Yau Mathematical Sciences Center, Tsinghua University \\
}
\date{} 

\begin{document}

\maketitle

\begin{abstract}
In linear regression, the signs of coefficients convey the direction of covariate effects and are central to empirical interpretation. In high-dimensional settings, however, the abundance of candidate covariates introduces substantial model selection uncertainty. We study the deliberate manipulation of coefficient signs through the inclusion of a carefully chosen auxiliary variable, a practice we term SHAVE (\textit{Sign Hacking with Auxiliary Variable Exploration}). We show that, conditional on the outcome and variables of interest, there exists a set of auxiliary-variable realizations with positive Lebesgue measure that lead to sign reversals upon inclusion. Moreover, with high probability, such variables can be found when many auxiliary candidates are available, leading simultaneously to reversed signs, inflated $t$- and $F$-statistics. Simulation studies and an empirical application corroborate these theoretical findings. We further propose detection strategies for SHAVE when augmented or independent datasets are available, as SHAVE has important implications for reproducibility, $p$-hacking, and research integrity.
\end{abstract}


\textbf{Keywords:} Sign reversal, $p$-hacking, Reproducibility, Model uncertainty, Research integrity

\textbf{Mathematics Subject Classification (2020):} 62XXX

\section{Introduction} \label{sec:Introduction}
\subsection{Importance of the sign of a coefficient}

The sign of a regression coefficient is central to interpreting the directional effect of a variable. Sign reversals can lead to fundamentally different conclusions and policy implications. A prominent example is the Environmental Kuznets Curve (EKC), whose empirical forms---U-shaped, inverted U-shaped, N-shaped, or linear---depend critically on the estimated signs of polynomial terms in environmental covariates \citep{shahbaz2019environmental}. Because the EKC models the relationship between economic growth and environmental degradation, such variations in coefficient signs carry substantial policy relevance for balancing economic and environmental goals. However, conflicting sign estimates introduce ambiguity and hinder the formulation of coherent sustainability policies.

\subsection{Sign change conditions in low-dimensional linear regression}     
Coefficient signs in linear regression are sensitive to variable inclusion or omission. \citet{leamer1975result} first derived a necessary condition for sign reversal based on $t$-statistics when a variable is excluded. Subsequent studies by \citet{visco1978obtaining,visco1988again} and \citet{oksanen1987practitioners} provided necessary and sufficient conditions expressed through partial correlations between the omitted variable and the remaining covariates. More recently, \citet{knaeble2017reversals} established analogous conditions for sign reversals when new variables are added, formulated via inequalities involving the coefficient of determination and partial correlations. 
A related literature studies coefficient stability by comparing selection on observed and unobserved controls. \citet{oster2019unobservable} develops a robustness measure based on proportional selection, and \citet{mastenpoirier_forthcoming} show that reversing a coefficient sign may require substantially less unobserved selection than explaining the coefficient away. Collectively, these studies clarify the mechanisms that govern sign changes in low-dimensional linear regression settings.

\subsection{Challenges in high-dimensional linear regression}

The advent of high-dimensional data has magnified the challenges of regression analysis. Although access to a large set of covariates enhances modeling flexibility, it also increases uncertainty in selecting the final model, often destabilizing coefficient signs even in the simplest linear regression settings. 
First, constructing candidate models introduces new sources of uncertainty. Decisions regarding variable transformations (e.g., logarithmic or Box-Cox transformations) or basis expansions (e.g., polynomial or trigonometric terms) are typically context-dependent and rarely universally applicable.  
Second, conventional low-dimensional techniques perform poorly in high dimensions due to issues such as the non-invertibility of the Gram matrix. Screening procedures \citep{fan2008sure} can reduce dimensionality below the sample size, but often at the cost of additional screening uncertainty. Further refinement through variable selection, using methods such as Lasso \citep{tibshirani1996regression}, Adaptive Lasso \citep{zou2006adaptive}, SCAD \citep{fan2001variable}, or MCP \citep{zhang2010nearly}, also suffers from selection instability in high-dimensional applications \citep{nan2014variable}. 

The uncertainty in variable selection arises primarily from the following sources:
\begin{enumerate}
	\item Criterion uncertainty. Different variable selection procedures employ distinct criteria, often producing divergent models and making the ``best'' choice unclear. 
	\item Cross-validation (CV) randomness. Even with a fixed method, variability in CV folds can produce unstable selection outcomes. Figure \ref{fig:CVuncertainty} illustrates this effect using the body fat dataset \citep{johnson1996fitting}, where repeated 10-fold CV leads to markedly different selected variable sets, even within the same method. 
	\item Slight change of data. High-dimensional selection procedures are highly sensitive to small changes in the data. For example, \citet{nan2014variable} report that removing merely 5\% of observations can cause the Lasso to select models differing by more than 15 terms on average. 
	\item High correlations. Strong dependence among covariates is common in genomic and macroeconomic data. Dropping correlated variables risks discarding relevant predictors, whereas substituting one correlated variable for another may preserve predictive accuracy but alter interpretation. As shown in Table \ref{tab:FatDataCorrelation}, the variable \textit{weight} exhibits high correlations with several covariates yet is rarely selected across CV replications in Figure \ref{fig:CVuncertainty}.
	\item Outliers. The prevalence of outliers increases with dimensionality and no selection method is fully robust to their influence. Although many procedures incorporate outlier detection, the fundamental uncertainty in variable selection persists.   
\end{enumerate}

\begin{figure}[!ht]
	\centering
	\includegraphics[width=0.95\textwidth]{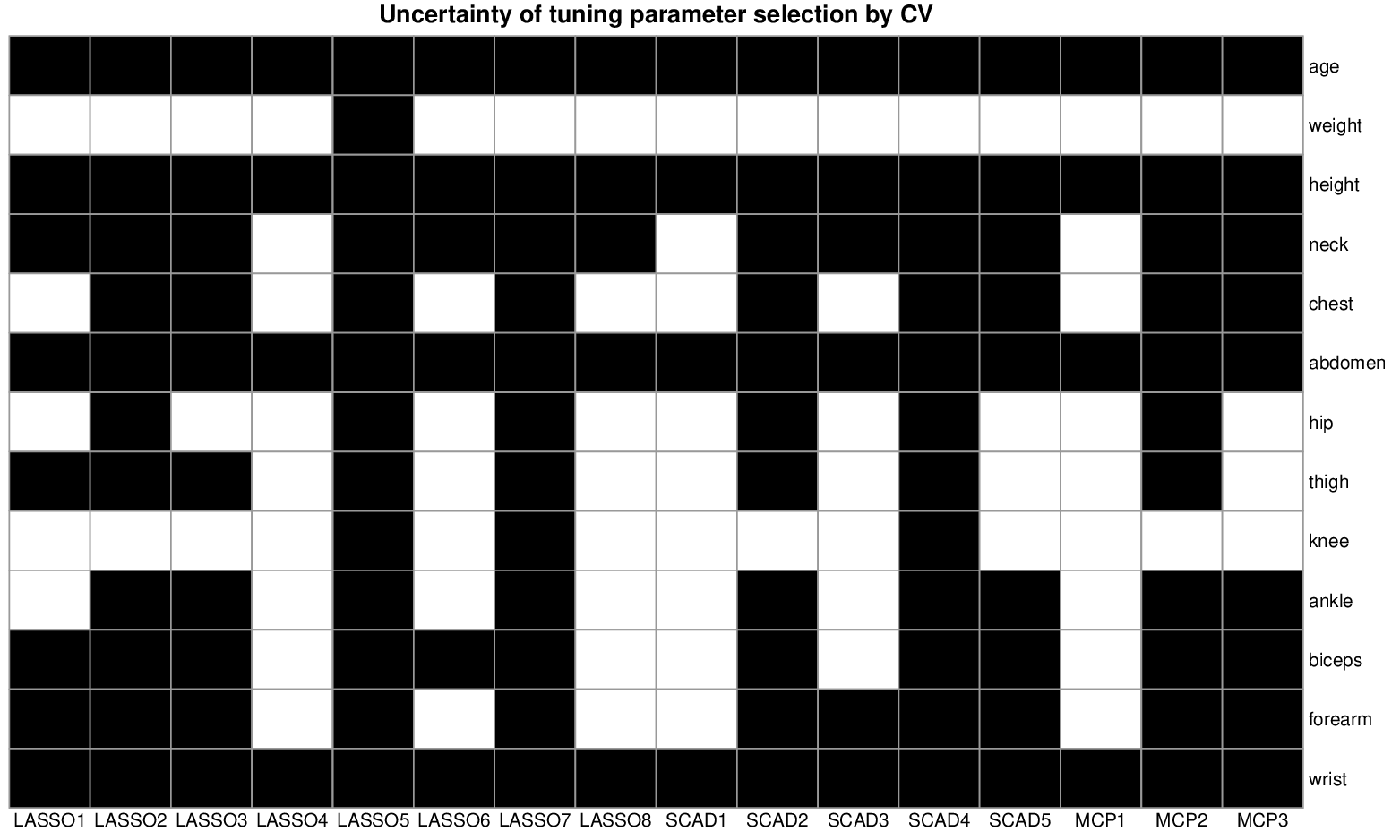}
	\caption{Model selection uncertainty in the body fat dataset. Each row denotes a covariate, and each column corresponds to a model selected by a given method, with tuning parameters determined via a single five-fold CV run. The figure displays eight Lasso models (Lasso1 to Lasso8), five SCAD models (SCAD1 to SCAD5) and three MCP models (MCP1 to MCP3). Black cells indicate covariates with nonzero coefficients in the corresponding model.}
	\label{fig:CVuncertainty}.
\end{figure}

\begin{table}[htbp]
	\centering
	\caption{Correlation matrix of the body fat data. Correlation coefficients with absolute values greater than or equal to 0.7 are shown in bold.}
	\begin{adjustbox}{width=\textwidth,center}
		\begin{tabular}{ccccccccccccccc}
			\toprule
			& SIRI & age & weight & height & neck & chest & abdomen & hip & thigh & knee & ankle & biceps & forearm & wrist \\
			\midrule
			SIRI & \textbf{1} &       &       &       &       &       &       &       &       &       &       &       &       &  \\
			age   & 0.29  & \textbf{1} &       &       &       &       &       &       &       &       &       &       &       &  \\
			weight & 0.62  & -0.02 & \textbf{1} &       &       &       &       &       &       &       &       &       &       &  \\
			height & -0.03 & -0.25 & 0.51  & \textbf{1} &       &       &       &       &       &       &       &       &       &  \\
			neck  & 0.49  & 0.12  & \textbf{0.81} & 0.33  & \textbf{1} &       &       &       &       &       &       &       &       &  \\
			chest & \textbf{0.70} & 0.18  & \textbf{0.89} & 0.22  & \textbf{0.77} & \textbf{1} &       &       &       &       &       &       &       &  \\
			abdomen & \textbf{0.82} & 0.24  & \textbf{0.87} & 0.19  & \textbf{0.73} & \textbf{0.9} & \textbf{1} &       &       &       &       &       &       &  \\
			hip   & 0.63  & -0.06 & \textbf{0.93} & 0.40  & \textbf{0.71} & \textbf{0.8} & \textbf{0.86} & \textbf{1} &       &       &       &       &       &  \\
			thigh & 0.55  & -0.22 & \textbf{0.85} & 0.35  & 0.67  & \textbf{0.7} & \textbf{0.74} & \textbf{0.88} & \textbf{1} &       &       &       &       &  \\
			knee  & 0.49  & 0.02  & \textbf{0.84} & 0.51  & 0.65  & \textbf{0.7}   & \textbf{0.71} & \textbf{0.81} & \textbf{0.78} & \textbf{1} &       &       &       &  \\
			ankle & 0.25  & -0.11 & 0.58  & 0.40  & 0.43  & 0.4   & 0.41  & 0.52  & 0.50  & 0.59  & \textbf{1} &       &       &  \\
			biceps & 0.48  & -0.04 & \textbf{0.79} & 0.32  & \textbf{0.71} & \textbf{0.7} & 0.66  & \textbf{0.72} & \textbf{0.74} & 0.65  & 0.45  & \textbf{1} &       &  \\
			forearm & 0.37  & -0.09 & 0.68  & 0.32  & 0.66  & 0.6   & 0.53  & 0.60  & 0.60  & 0.58  & 0.43  & \textbf{0.70} & \textbf{1} &  \\
			wrist & 0.34  & 0.22  & \textbf{0.73} & 0.40  & \textbf{0.73} & 0.6   & 0.60  & 0.63  & 0.54  & 0.66  & 0.55  & 0.61  & 0.60  & \textbf{1} \\
			\bottomrule
		\end{tabular}%
	\end{adjustbox}
	\label{tab:FatDataCorrelation}%
\end{table}%

Existing research has sought to address the aforementioned sources of uncertainty. Methods such as VSD \citep{nan2014variable}, SOIL importance \citep{ye2018sparsity}, and various stability measures \citep{chen2007model,nogueira2017stability} quantify the variability of selected variables but do not reliably estimate coefficient signs. Bagging predictors \citep{breiman1996heuristics,breiman1996bagging} and model averaging techniques \citep{raftery1997bayesian,yang2017toward,hansen2007least,wan2010least} reduce model selection uncertainty by aggregating across candidate models, yet their focus remains on improving prediction rather than interpretability. 
Another line of research develops confidence sets of models expected to contain the true specification \citep{hansen2011model,ferrari2015confidence,zheng2019model,lei2020cross,li2021robust}, explicitly acknowledging model uncertainty. However, such approaches complicate inference when coefficient signs differ across models within the confidence set. Stability selection \citep{meinshausen2010stability,lim2016estimation} mitigates instability by retaining variables frequently selected under subsampling but still struggles under strong correlation among covariates.  

\subsection{Sign hacking}

Despite these advances, uncertainty in high-dimensional model specification can still create opportunities for deliberate manipulation. Related practices, including $p$-hacking \citep{simmons2011false,simonsohn2014p}, reverse $p$-hacking \citep{yang2025study}, and publication bias \citep{lindsay2015replication,elliott2019detecting}, exploit such uncertainty by selectively emphasizing favorable results. These practices suggest that specification uncertainty is not merely a statistical concern, but can also be strategically leveraged. This motivates the question: Can desired coefficient signs be induced by deliberately introducing observed auxiliary variables drawn from a large candidate pool? We refer to this phenomenon as ``\textbf{Sign Hacking with Auxiliary Variable Exploration (SHAVE)}'', an instance of unjustified manipulation of data or analysis to achieve a preferred sign pattern. When SHAVE occurs, regression results can be misleading because the hacked model may appear statistically convincing, exhibiting strong $t$-statistics and reproducible results within the hacker's constructed context. This information asymmetry between the hacker and the audience undermines scientific credibility, especially when coefficient signs are used for substantive interpretation.

This paper establishes the theoretical possibility of SHAVE in linear regression. We first show that, conditional on the outcome and variables of interest, there exists a set of auxiliary-variable realizations with positive Lebesgue measure whose inclusion induces coefficient sign reversals, offering geometric insight into when and why sign changes occur. We then demonstrate that, with high probability, one can select a variable from a large pool of auxiliary candidates that reverses the estimated signs of targeted coefficients while preserving others. Moreover, the resulting hacked coefficients can attain statistical significance and the associated model specification tests may exhibit spurious robustness. These results provide a formal mechanism underlying $p$-hacking, showing how selective specification search can generate statistically significant yet potentially spurious findings. We further discuss detection strategies based on augmented or independent datasets, as SHAVE relates to broader empirical concerns, including the reproducibility crisis, $p$-hacking, sponsorship bias and research misconduct.


The remainder of the paper is organized as follows: Section \ref{sec:SignChangePhenomenon} develops a geometric framework that establishes the necessary and sufficient conditions for sign reversal. Section \ref{sec:SignHackProblem} introduces SHAVE and demonstrates how it can alter both the sign and statistical significance of estimated coefficients. Section \ref{sec:EmpiricalIllustration} presents simulations and empirical evidence that support the theoretical results. Section \ref{sec:detectSHAVE} outlines potential detection strategies and Section \ref{sec:Conclusion} concludes.

\textbf{Notations}. We use $ \Vert \cdot \Vert  $ to denote the $ \ell^2 $ norm for vectors and the operator norm for matrices. Vectors in Euclidean space are denoted by $\vec{\cdot}$, with inner product $ \langle \vec{u}, \vec{v} \rangle = \vec{u}^\top\vec{v} $ for $ \vec{u}, \vec{v} \in \mathbb{R}^n $. For two distinct vectors $ \vec{u} $ and $ \vec{v} $, we denote by $\mathrm{span}\{\vec{u}, \vec{v}\}$ their linear span. The sign function $ \sgn (\cdot) $ is defined by $ \sgn x = 1 $ if $ x > 0 $, $ \sgn x = -1 $ if $ x < 0 $ and $ \sgn x = 0 $ if $ x = 0 $. Finally, $\mathds{1}_{\mathcal{A}}(a)$ denotes the indicator of the set $\mathcal{A}$, and we write $[a] = \{1, \dots, a\} $ for any positive integer $a$. We denote by $P_{X}$ the orthogonal projection operator onto the column space of $X$, given by $P_X = X(X'X)^{-1}X'$. We further define $M_X = I - P_X $ as the projection onto the orthogonal complement of the column space of $X$. Let $I_k$ denote the $k \times k$ identity matrix. 

\section{A geometric perspective on sign changes} \label{sec:SignChangePhenomenon}
To understand how the signs of coefficients can be manipulated, we first characterize the geometric conditions under which sign reversals occur. 

Suppose $\vec{y}, \vec{x}, \vec{v} \in \mathbb{R}^n$ are the realized sample vectors corresponding to the variables $y, x$ and $v$. The vectors $ \vec{x} $ and $\vec{v}$ span a two-dimensional subspace. Regressing $y$ on $x$ and $v$ corresponds, at the sample level, to projecting $\vec{y}$ onto $\mathrm{span}\{\vec{x}, \vec{v}\}$, yielding $P_{\vec{x}, \vec{v}}\vec{y}$. Since the signs of the estimated coefficients depend only on this projection, the analysis of sign changes can be reduced to this two-dimensional setting. We therefore focus on configurations in which $\vec{y}, \vec{x}$ and $\vec{v}$ lie in the same plane.

Without loss of generality, assume that $\vec{y}$ divides the first and second quadrants, and that $\vec{x}$ is located in the first quadrant. Under this setup, the coefficient of $x$ in the regression $y \sim x$ is positive. When $v$ is added, regressing $y$ on $x$ and $v$ corresponds geometrically to expressing $\vec{y}$ as a linear combination of $\vec{x}$ and $\vec{v}$. This representation can be visualized via a parallelogram construction: drawing lines through the endpoint of $\vec{y}$ parallel to $\vec{x}$ and $\vec{v}$, whose intersections with the lines spanned by $\vec{x}$ and $\vec{v}$ determine the signs of the corresponding regression coefficients.

By symmetry, configurations in which $\vec{v}$ lies in the third or fourth quadrant are equivalent to those in the first or second quadrant, and do not yield qualitatively different conclusions. It therefore suffices to consider cases in which $\vec{v}$ lies in the first or second quadrant. Under this restriction, only three qualitatively distinct configurations arise:
\begin{enumerate}
	\item $\vec{v}$ in the second quadrant. Representing $\vec{y}$ in terms of $\vec{x} $ and $ \vec{v}$ requires a positive coefficient on $\vec{x}$, so no sign reversal occurs.
	\item $\vec{v}$ in the first quadrant between $\vec{x}$ and $\vec{y}$. In this case, $\vec{x}$ must enter with a negative coefficient to counteract the effect of $\vec{v}$, leading to a sign reversal.
	\item $\vec{v}$ is in the first quadrant but forms a larger angle with $\vec{y}$ than $\vec{x}$. The contribution of $\vec{x}$ remains positive, and the sign is preserved. Note that this case is similar to the second above, except that the roles of $\vec{x}$ and $\vec{v}$ are switched. 
\end{enumerate}

\begin{figure}[htbp]
	\centering
	\captionsetup[subfigure]{skip=0pt} 
	\begin{subfigure}[t]{0.3\textwidth}
		\centering
		\begin{tikzpicture}[scale=1.05, line cap=round, line join=round, >=Stealth]
			\path[use as bounding box] (-2,-1.8) rectangle (2,3.2);
			\draw[->, thin] (-2,0) -- (2,0);
			\draw[->, thin] (0,-1.6) -- (0,3.2);
			
			\coordinate (O)  at (0,0);
			\coordinate (xv) at (1.0, 1.0);    
			\coordinate (vv) at (-1.0,1.0);   
			\coordinate (yv) at (0,1.5);      
			
			\def\a{0.75}
			\def\b{0.75}
			
			\coordinate (X) at ($\a*(xv)$);
			\coordinate (V) at ($\b*(vv)$);
			\coordinate (Y) at ($(X)+(V)$);
			
			\draw[->, thick] (O) -- (xv) node[pos=1.05, right] {$\vec x$};
			\draw[->, thick] (O) -- (vv) node[pos=1.05, left]  {$\vec v$};
			\draw[->, thick] (O) -- (yv) node[pos=1.05, right] {$\vec y$};
			
			\draw[->, red, thick] (O) -- (V);
			\draw[->, red, thick] 
			(V) -- (Y)
			node[midway, above, xshift = -6pt, yshift=0pt, text=black, rotate=0] {\Large $\boldmath{+}$};
			\draw[->, red, thick, dashed] 
			(O) -- (X)
			node[midway, below, xshift = 6pt, yshift=4pt, text=black, rotate=0] {\Large $\boldmath{+}$};
			\draw[->, red, thick, dashed] (X) -- (Y);
			
			\fill (V) circle (1.1pt);
			\fill (Y) circle (1.1pt);
		\end{tikzpicture}
		\caption{preserved sign on $\vec{x}$}
		\label{fig:para-i}
	\end{subfigure}
	\hfill
	\begin{subfigure}[t]{0.3\textwidth}
		\centering
		\begin{tikzpicture}[scale=1.05, line cap=round, line join=round, >=Stealth]
			\path[use as bounding box] (-2,-1.8) rectangle (2,3.2);
			\draw[->, thin] (-2,0) -- (2,0);
			\draw[->, thin] (0,-1.6) -- (0,3.2);
			
			\coordinate (O)  at (0,0);
			\coordinate (xv) at (1.0, 1.0);    
			\coordinate (vv) at (1, 2);   
			\coordinate (yv) at (0, 1.5);      
			
			\def\a{-1.5}
			\def\b{1.5}
			
			\coordinate (X) at ($\a*(xv)$);
			\coordinate (V) at ($\b*(vv)$);
			\coordinate (Y) at ($(X)+(V)$);
			
			\draw[->, thick] (O) -- (xv) node[pos=1.05, right] {$\vec x$};
			\draw[->, thick] (O) -- (vv) node[pos=1.05, right]  {$\vec v$};
			\draw[->, thick] (O) -- (yv) node[pos=1.05, left] {$\vec y$};
			
			\draw[->, red, thick] (O) -- (V);
			\draw[->, red, thick] 
			(V) -- (Y)
			node[midway, above, xshift = -4pt, yshift=1pt, text=black, rotate=0] {\Large $\boldmath{-}$ };
			\draw[->, red, thick, dashed] 
			(O) -- (X)
			node[midway, below, xshift = 8pt, yshift=2pt, text=black, rotate=0] {\Large $\boldmath{-}$ };
			\draw[->, red, thick, dashed] (X) -- (Y);
			
			\fill (V) circle (1.1pt);
			\fill (Y) circle (1.1pt);
		\end{tikzpicture}
		\caption{reversed sign on $\vec{x}$}
		\label{fig:para-ii}
	\end{subfigure}
	\hfill
	\begin{subfigure}[t]{0.3\textwidth}
		\centering
		\begin{tikzpicture}[scale=1.05, line cap=round, line join=round, >=Stealth]
			\path[use as bounding box] (-2,-1.8) rectangle (2,3.2);
			\draw[->, thin] (-2,0) -- (2,0);
			\draw[->, thin] (0,-1.6) -- (0,3.2);
			
			\coordinate (O)  at (0,0);
			\coordinate (xv) at (1.0, 1.0);    
			\coordinate (vv) at (2, 0.5);   
			\coordinate (yv) at (0,1.5);      
			
			\def\a{2}
			\def\b{-1}
			
			\coordinate (X) at ($\a*(xv)$);
			\coordinate (V) at ($\b*(vv)$);
			\coordinate (Y) at ($(X)+(V)$);
			
			\draw[->, thick] (O) -- (xv) node[pos=1.05, left, xshift = -2pt, yshift = 2pt] {$\vec x$};
			\draw[->, thick] (O) -- (vv) node[pos=1.05, above]  {$\vec v$};
			\draw[->, thick] (O) -- (yv) node[pos=1.05, left] {$\vec y$};
			
			\draw[->, red, thick] (O) -- (V);
			\draw[->, red, thick] 
			(V) -- (Y)
			node[midway, above, yshift=2pt, text=black, rotate=0] {\Large $\boldmath{+}$};
			\draw[->, red, thick, dashed] 
			(O) -- (X)
			node[midway, below, yshift=-2pt, text=black, rotate=0] {\Large $\boldmath{+}$};;
			\draw[->, red, thick, dashed] (X) -- (Y);
			
			\fill (V) circle (1.1pt);
			\fill (Y) circle (1.1pt);
		\end{tikzpicture}
		\caption{preserved sign on $\vec{x}$}
		\label{fig:para-iii}
	\end{subfigure}
	\caption{Geometric configurations for sign changes of $\vec{x}$. A ``+'' (``--'') denotes sign preservation (reversal).}
	\label{fig:parallelogram-cases}
\end{figure}

Figure \ref{fig:parallelogram-cases} illustrates these three cases, where ``+'' and ``--'' indicate sign preservation and reversal, respectively. This simple geometric argument shows that the sign reversals are driven entirely by the relative orientations of $\vec{y}, \vec{x}$ and $\vec{v}$. In Appendix \ref{app:geometry}, we formalize this understanding using angles between vectors in Euclidean space, covering both planar and non-planar configurations. Expressing least squares coefficients in terms of vector angles further shows that classical sign change conditions in the literature---such as those in \citet{leamer1975result,visco1978obtaining,visco1988again} and \citet{oksanen1987practitioners}---are fully consistent with the geometric interpretation in Figure \ref{fig:parallelogram-cases}.

Building on the characterization of sign reversals for a single predictor, we now extend the analysis to the case with multiple predictors. Let $\boldsymbol{X} = (\vec{x}_1, \dots, \vec{x}_q)$ denote a set of linearly independent vectors in $\mathbb{R}^n$. The orthogonal decomposition of $\vec{y}$ onto $\mathrm{span}(\boldsymbol{X})$ and its complement is
\begin{align*}
	\vec{y} = \boldsymbol{X}\hat{\beta}_{\boldsymbol{X}} + \vec{e}_{\boldsymbol{X}}, 
\end{align*}
where $\hat{\beta}_{\boldsymbol{X}} = (\boldsymbol{X}'\boldsymbol{X})^{-1}\boldsymbol{X}'\vec{y}$ and $\vec{e}_{\boldsymbol{X}} = M_{\boldsymbol{X}} \vec{y}$. We assume $\Vert \vec{e}_{\boldsymbol{X}} \Vert > 0$. After incorporating an additional vector $\vec{v}$ with $\vec{v} \notin \mathrm{span}(\boldsymbol{X})$, the decomposition becomes
\begin{align*}
	\vec{y} = \boldsymbol{X} \hat{\alpha}_{\boldsymbol{X}} + \vec{v} \hat{\alpha}_v + \vec{e}_{\boldsymbol{X}v},
\end{align*}
where $\vec{e}_{\boldsymbol{X}v} = M_{\boldsymbol{X},\vec{v}} \vec{y}$ and we assume $\Vert \vec{e}_{\boldsymbol{X}v} \Vert > 0$.
Let $\mathcal{I}_1 $ and $ \mathcal{I}_2$ form a partition of $[q]$, with either set possibly empty. The following theorem establishes the existence of a non-negligible set of perturbation vectors capable of inducing selective sign reversals within $\boldsymbol{X}$. 

\begin{thm}\label{thm:NonZeroSet}
	Given $(\vec{y}, \boldsymbol{X})$, there exists a measurable set $\mathcal{V}^* \subset \mathbb{R}^n$ with positive Lebesgue measure such that for every $\vec{v} \in \mathcal{V}^*$,
	\begin{align} \label{re:SignChangeAndUnchange}
		\sgn \hat{\beta}_{\boldsymbol{X}, j} = - \sgn \hat{\alpha}_{\boldsymbol{X}, j}, \quad j \in \mathcal{I}_1 \text{ and } \sgn \hat{\beta}_{\boldsymbol{X}, j} = \sgn \hat{\alpha}_{\boldsymbol{X}, j} \quad j \in \mathcal{I}_2. 
	\end{align} 
	
	A constructive realization of such a set can be obtained by considering vectors of the form
	\begin{align*}
		\vec{v} = \boldsymbol{X} A + b \vec{y} + \varepsilon \vec{u}.
	\end{align*}
	Choose $A \in \mathbb{R}^q $ with $A_j \neq 0$ for all $j \in [q]$ and define
	\begin{align} \label{eq:definekappa}
		\kappa_{\min}(A) = \min_{j \in [q]} \frac{ |A_j| \cdot \Vert M_{\boldsymbol{X}_{-j}}\boldsymbol{X}_j \Vert}{\Vert M_{\boldsymbol{X}_{-j}}\vec{y}\Vert}, \quad \kappa_{\max}(A) = \max_{j \in [q]} \frac{|A_j| \cdot \Vert M_{\boldsymbol{X}_{-j}}\boldsymbol{X}_j \Vert}{\Vert M_{\boldsymbol{X}_{-j}}\vec{y}\Vert}.
	\end{align}
	Fix any $b > 0$ satisfying $b \ge \kappa_{\max}(A)$. The choices of $A$ and $b$ are required to satisfy
	\begin{equation} \label{eq:Abassumption}
		b A_j \hat{\beta}_{\boldsymbol{X},j} > 0 \text{ for } j \in \mathcal{I}_1, \quad b A_j \hat{\beta}_{\boldsymbol{X},j} < 0 \text{ for } j \in \mathcal{I}_2.
	\end{equation}
	For some fixed $\eta \in (0, 1/4)$, let $\varepsilon_0 = \eta \Vert M_{\boldsymbol{X}}\vec{y}\Vert \kappa_{\min}(A)$. Define
	\begin{equation} \label{eq:definesetV}
		\mathcal{V} = \{\vec{v} = \boldsymbol{X} A + b\vec{y} + \varepsilon \vec{u}: \vec{u} \in \mathbb{S}^{n-1}, 0 < \varepsilon \leq \varepsilon_0 \}.
	\end{equation}
	Here $\boldsymbol{X}_{-j}$ denotes the design matrix obtained from $\boldsymbol{X}$ by removing its $j$-th column. In particular, one may take $\mathcal{V}^* = \mathcal{V}$. 
\end{thm}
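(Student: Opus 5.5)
The plan is to compute $\hat{\alpha}_{\boldsymbol{X}}$ explicitly for $\vec{v}\in\mathcal{V}$. I would first show that the unperturbed vector ($\varepsilon=0$) produces exactly the sign pattern \eqref{re:SignChangeAndUnchange}. I would then show that the perturbation $\varepsilon\vec{u}$, for $\varepsilon\le\varepsilon_0$, moves each coordinate by less than its own magnitude. Positive Lebesgue measure is then immediate: $\mathcal{V}$ is the closed ball of radius $\varepsilon_0>0$ around $\boldsymbol{X}A+b\vec{y}$ with its centre removed. Note that $\varepsilon_0>0$ because $A_j\neq 0$ for all $j$ and $\Vert M_{\boldsymbol{X}}\vec{y}\Vert>0$.

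\emph{Step 1 (algebraic identity).} By Frisch--Waugh--Lovell,
\begin{align*}
\hat{\alpha}_v=\frac{\langle M_{\boldsymbol{X}}\vec{v},\vec{y}\rangle}{\Vert M_{\boldsymbol{X}}\vec{v}\Vert^2},\qquad
\hat{\alpha}_{\boldsymbol{X}}=\hat{\beta}_{\boldsymbol{X}}-(\boldsymbol{X}'\boldsymbol{X})^{-1}\boldsymbol{X}'\vec{v}\,\hat{\alpha}_v .
\end{align*}
Write $\vec{e}=M_{\boldsymbol{X}}\vec{y}$ and $\vec{w}=(\boldsymbol{X}'\boldsymbol{X})^{-1}\boldsymbol{X}'\vec{u}$. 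For $\vec{v}=\boldsymbol{X}A+b\vec{y}+\varepsilon\vec{u}$ we get
\begin{align*}
(\boldsymbol{X}'\boldsymbol{X})^{-1}\boldsymbol{X}'\vec{v}=A+b\hat{\beta}_{\boldsymbol{X}}+\varepsilon\vec{w},\qquad M_{\boldsymbol{X}}\vec{v}=b\vec{e}+\varepsilon M_{\boldsymbol{X}}\vec{u}.
\end{align*}
At $\varepsilon=0$ this gives $\hat{\alpha}_v=1/b$ and $\hat{\alpha}_{\boldsymbol{X}}=-A/b$.

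Since $b>0$, $\sgn\hat{\alpha}_{\boldsymbol{X},j}=-\sgn A_j$. Condition \eqref{eq:Abassumption} says $\sgn A_j=\sgn\hat{\beta}_{\boldsymbol{X},j}$ on $\mathcal{I}_1$ and $\sgn A_j=-\sgn\hat{\beta}_{\boldsymbol{X},j}$ on $\mathcal{I}_2$. This is exactly \eqref{re:SignChangeAndUnchange}. The condition also forces $\hat{\beta}_{\boldsymbol{X},j}\ne0$.

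\emph{Step 2 (perturbation bound).} Write $\hat{\alpha}_v=1/b+\delta$. Then
\begin{align*}
\hat{\alpha}_{\boldsymbol{X},j}+\frac{A_j}{b}=-(A_j+b\hat{\beta}_{\boldsymbol{X},j})\,\delta-\varepsilon w_j\hat{\alpha}_v .
\end{align*}
The goal is to show the right-hand side has modulus strictly less than $|A_j|/b$. The coordinatewise FWL formulas give
\begin{align*}
|w_j|\le \Vert M_{\boldsymbol{X}_{-j}}\boldsymbol{X}_j\Vert^{-1},\qquad |\hat{\beta}_{\boldsymbol{X},j}|\le \frac{\Vert M_{\boldsymbol{X}_{-j}}\vec{y}\Vert}{\Vert M_{\boldsymbol{X}_{-j}}\boldsymbol{X}_j\Vert}\le \frac{|A_j|}{\kappa_{\min}(A)} .
\end{align*}
This explains the appearance of $\kappa_{\min}(A)$ in \eqref{eq:definekappa}.

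Set $t=\varepsilon/(b\Vert\vec{e}\Vert)$. Then $t\le\eta\,\kappa_{\min}(A)/b\le\eta<1/4$, using $b\ge\kappa_{\max}(A)\ge\kappa_{\min}(A)$. Expanding numerator and denominator of $\hat{\alpha}_v$ gives $|\delta|\le C\,t/b$ for an explicit constant $C$, for example $C=(1+t)/(1-t)^2-1$ over $t$ suitably bounded. Each term is then controlled:
\begin{align*}
|A_j|\,|\delta| &\lesssim \frac{|A_j|\,\eta}{b},\\
b|\hat{\beta}_{\boldsymbol{X},j}|\,|\delta| &\le \frac{C\,|A_j|\,t}{\kappa_{\min}(A)}\le \frac{C\,\eta\,|A_j|}{b},\\
\varepsilon|w_j|\,|\hat{\alpha}_v| &\le \frac{\varepsilon}{b\,\Vert M_{\boldsymbol{X}_{-j}}\boldsymbol{X}_j\Vert}(1+Ct) .
\end{align*}
For the last term, $\varepsilon\le\eta\Vert\vec{e}\Vert\kappa_{\min}(A)\le\eta\,\Vert M_{\boldsymbol{X}_{-j}}\vec{y}\Vert\,\kappa_{\min}(A)$, since $\Vert\vec{e}\Vert\le\Vert M_{\boldsymbol{X}_{-j}}\vec{y}\Vert$. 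Combined with the definition of $\kappa_{\min}(A)$, this yields a bound of order $\eta|A_j|/b$.

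\emph{Main obstacle.} The hard step is keeping the constants tight enough that the three contributions sum to less than $|A_j|/b$ when $\eta<1/4$. In particular the bound on $|\delta|$ needs $\langle M_{\boldsymbol{X}}\vec{u},\vec{e}\rangle\le\Vert\vec{e}\Vert$ together with $\Vert b\vec{e}+\varepsilon M_{\boldsymbol{X}}\vec{u}\Vert\ge b\Vert\vec{e}\Vert(1-t)$. I would first try to obtain a total of the form $\frac{|A_j|}{b}\cdot\frac{c\,\eta}{(1-\eta)^2}$ with $c\le 3$, which is below $|A_j|/b$ for $\eta<1/4$. If the constants do not close, I would shrink $\varepsilon_0$, which only changes the radius of the ball and not the measure argument.

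Finally, $\vec{v}\in\mathrm{span}(\boldsymbol{X})$ and $\Vert\vec{e}_{\boldsymbol{X}v}\Vert=0$ occur only on a Lebesgue-null set when $n>q+1$. Removing that set leaves $\mathcal{V}^*$ with positive measure.
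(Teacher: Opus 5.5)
Your decomposition $\hat\alpha_{\boldsymbol X,j}+A_j/b=-(A_j+b\hat\beta_{\boldsymbol X,j})\delta-\varepsilon w_j\hat\alpha_v$ is the right skeleton. Step 1 and the three coordinatewise FWL bounds ($|w_j|\le\Vert M_{\boldsymbol X_{-j}}\boldsymbol X_j\Vert^{-1}$, $|\hat\beta_{\boldsymbol X,j}|\le|A_j|/\kappa_{\min}(A)$, $\varepsilon/\Vert M_{\boldsymbol X_{-j}}\boldsymbol X_j\Vert\le\eta|A_j|$) are also correct. However, the constant-chasing you propose does not close on the stated range $\eta\in(0,1/4)$.

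Bounding the numerator and denominator of $\hat\alpha_v$ separately gives at best $|\delta|\le t(1+t)/(b(1-t)^2)$ and $|\hat\alpha_v|\le 1/b+|\delta|$. Take $t=\eta$, which occurs when $b=\kappa_{\min}(A)=\kappa_{\max}(A)$ and $\varepsilon=\varepsilon_0$. Your three contributions then sum to $\frac{|A_j|}{b}\cdot\frac{3\eta+\eta^2+2\eta^3}{(1-\eta)^2}$. This equals $\tfrac32\,|A_j|/b$ at $\eta=1/4$ and already exceeds $|A_j|/b$for $\eta$ above about $0.197$.

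Your fallback target is also mis-evaluated: $3\eta/(1-\eta)^2=4/3$ at $\eta=1/4$, so $c=3$ only works for $\eta<(5-\sqrt{21})/2\approx0.21$. Shrinking $\varepsilon_0$ would still give existence of some positive-measure $\mathcal V^*$. It would not prove the theorem's constructive claim that this particular $\mathcal V$, with $\varepsilon_0=\eta\Vert M_{\boldsymbol X}\vec y\Vert\kappa_{\min}(A)$, works for every $\eta<1/4$.

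The missing step is to bound $\hat\alpha_v$ and $\delta$ without splitting numerator from denominator. By Cauchy--Schwarz, $|\hat\alpha_v|=|\langle M_{\boldsymbol X}\vec v,\vec e\rangle|/\Vert M_{\boldsymbol X}\vec v\Vert^2\le\Vert\vec e\Vert/\Vert M_{\boldsymbol X}\vec v\Vert\le 1/(b(1-t))$, using $\Vert M_{\boldsymbol X}\vec v\Vert\ge b\Vert\vec e\Vert-\varepsilon$. Next substitute $\vec e=(M_{\boldsymbol X}\vec v-\varepsilon M_{\boldsymbol X}\vec u)/b$ into the same formula. This gives $\delta=-\varepsilon\langle M_{\boldsymbol X}\vec v,M_{\boldsymbol X}\vec u\rangle/(b\Vert M_{\boldsymbol X}\vec v\Vert^2)$, hence $|\delta|\le t/(b(1-t))$. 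One factor of $\Vert M_{\boldsymbol X}\vec v\Vert$ cancels here, which your expansion loses.

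Plugging these into your own three estimates makes each contribution at most $\frac{|A_j|}{b}\cdot\frac{\eta}{1-\eta}$. Therefore $|\hat\alpha_{\boldsymbol X,j}+A_j/b|\le\frac{|A_j|}{b}\cdot\frac{3\eta}{1-\eta}$, which is strictly less than $|A_j|/b$ exactly when $\eta<1/4$, matching the theorem's threshold.

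Finally, no null set needs to be removed. On all of $\mathcal V$ we have $\Vert M_{\boldsymbol X}\vec v\Vert\ge b\Vert\vec e\Vert(1-\eta)>0$, so $\vec v\notin\mathrm{span}(\boldsymbol X)$ automatically. The sign conclusions do not depend on $\vec e_{\boldsymbol X v}\neq 0$, so one can indeed take $\mathcal V^*=\mathcal V$.
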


\begin{remark}
	Once a sign convention for $b$ has been fixed, condition \eqref{eq:Abassumption} imposes only sign restrictions on $ A $, and is therefore readily satisfied. Given $(\vec{y}, \boldsymbol{X})$, the signs of $\hat{\beta}_{\boldsymbol{X}}$ are fixed, so admissible choices of $A$ always exist. The magnitudes of $A$ and $b$ determine the admissible range of $\varepsilon$ and hence the size of the constructed set $\mathcal{V}$. 
\end{remark}

\begin{remark}
	Theorem \ref{thm:NonZeroSet} accommodates all possible patterns of selective sign change, ranging from a single reversal to simultaneous reversals across all coefficients, while allowing the remaining coefficients to retain their original signs. For example, taking $\mathcal{I}_1$ as a singleton and $\mathcal{I}_2 = \mathcal{I}_1^c$ yields a single sign reversal, whereas setting $\mathcal{I}_1 = [q]$ and $\mathcal{I}_2 = \emptyset$ produces simultaneous reversals of all coefficients. As the number of imposed sign-reversal constraints increases, the admissible set $\mathcal{V}$ becomes more restricted.
\end{remark}

\begin{remark}
	Theorem \ref{thm:NonZeroSet} generalizes classical sign reversal phenomena such as Simpson's paradox \citep{simpson1951interpretation} and negative suppression \citep{nickerson2019simpson}. In contrast to the bivariate setting, it accommodates configurations in which multiple coefficients reverse simultaneously while others remain unchanged. The existence of a measurable set $\mathcal{V}^*$ with positive Lebesgue measure shows that such selective reversals occur over a non-negligible region of the sample space, rather than occurring only in degenerate or asymptotic cases. 
\end{remark}

The proposed sign change condition holds for any fixed sample size $n$, ensuring its relevance for empirical applications. Formulated in Euclidean space under the Lebesgue measure, it provides a geometric characterization of the set of vectors $\vec{v}$ that induce sign reversals. To the best of our knowledge, it is the first condition that enables both identifying and quantifying this set within a unified geometric framework. This perspective yields a transparent and interpretable approach for understanding the mechanisms underlying SHAVE, particularly in high-dimensional settings where existing analytic conditions become intractable under selection uncertainty. The resulting characterization will serve as the foundation for analyzing deliberate sign manipulation in Section \ref{sec:SignHackProblem}.

\section{The SHAVE} \label{sec:SignHackProblem}

Recall that SHAVE refers to the introduction of auxiliary variables to achieve a desired sign pattern of estimated coefficients. This phenomenon has been documented in a range of empirical settings, as illustrated by the following examples. 

\begin{ex}[Covariate manipulation]
	Diederik Stapel, a former professor of social psychology at Tilburg University, was found guilty of fabricating and manipulating data over several years. One technique involved covariate manipulation, where variables were selectively included in or excluded from regression models to produce desired outcomes\footnote{See \href{https://pure.mpg.de/rest/items/item_1569964/component/file_1569966/content}{``Flawed science: The fraudulent research practices of social psychologist Diederik Stapel''}}. Similarly, Brian Wansink, formerly at Cornell University, engaged in selective reporting and covariate adjustments to obtain favorable outcomes\footnote{See \href{https://www.buzzfeednews.com/article/stephaniemlee/brian-wansink-cornell-$p$-hacking\#.bmxLG1XPpN}{``Here’s How Cornell Scientist Brian Wansink Turned Shoddy Data Into Viral Studies About How We Eat''}}. 
\end{ex}

Although SHAVE is framed from the perspective of deliberate manipulation, similar patterns also arise unintentionally under exploratory model specification.

\begin{ex}[Determinants of Economic Growth]
	The long-standing debate on the determinants of economic growth illustrates how model specifications can alter coefficient signs even in the absence of intentional manipulation. \citet{levine1992sensitivity} showed that regression outcomes in cross-country growth studies are highly sensitive to variable inclusion, with coefficient signs varying across specifications. Typical regressors include variables such as \textit{imports, openness measures, credit growth, and political instability indicators}. \citet{sala2004determinants} further evaluated 67 commonly used covariates using Bayesian Averaging of Classical Estimates. Among the top 18 variables identified, half exhibited sign reversals in more than 2\% of posterior draws. Their analysis, based on nearly 89 million regressions, underscores the instability of estimated relationships in this literature.     
\end{ex}

These examples motivate a theoretical investigation into the mechanisms underlying SHAVE and its implications for statistical inference in finite samples.  

Now suppose a hacker collects an original sample $(y_i, x_i)$ of size $n$ and obtains the baseline regression model:
\begin{align} \label{model:OriginalEstimatedModel}
	Y = X \hat{\beta}_X + \hat{e}_X,
\end{align}
where $Y = (y_1, \dots, y_n)'$, $X = (x_1', \dots, x_n')'$ and each $x_i \in \mathbb{R}^q$. Here, $\hat{\beta}_X$ is the OLS estimate and $ \hat{e}_X $ is the associated residual vector. The analysis is conducted in a finite-sample setting with fixed $q$, assuming $X$ has full column rank and $\Vert \hat{e}_{X} \Vert > 0$ to exclude degenerate cases. 

Some estimated coefficients in $\hat{\beta}_X$ may not exhibit the signs desired by the hacker. To alter these signs, the hacker introduces an additional covariate drawn from a \textit{auxiliary} pool of variables $X_A \in \mathbb{R}^{n \times p}$, complementing the original covariate set $X$. Let $V \in X_A$ denote a candidate covariate. The hacker then fits the augmented model:
\begin{align} \label{model:SignChangeModel}
	Y = X \hat{\alpha}_X + V \hat{\alpha}_V +  \hat{e}_{XV},
\end{align}
where $\hat{\alpha}_X$ and $\hat{\alpha}_V$ are the corresponding OLS estimates. Denote the target sign configuration by $ \sgn \hat{\alpha}_X = S^{\star}$, where $ S^{\star} \in \{-1, 1\}^q $ represents the intended sign pattern and differs from $ \sgn \hat{\beta}_X $ in at least one component. Achieving this configuration constitutes a successful instance of SHAVE.

In low-dimensional settings where the auxiliary pool $X_A$ is finite and limited in size, the effect of each candidate covariate can, in principle, be assessed using classical sign change conditions. Since the set of admissible perturbations is small, such manipulations are relatively transparent and easier to detect. In contrast, in high-dimensional environments, the auxiliary pool may contain a vast number of candidate covariates, many of which are correlated with both $X$ and $Y$. This substantially enlarges the space of feasible perturbations. With a sufficiently rich pool, the hacker can often rationalize the inclusion of a selected $V$ as the outcome of a legitimate data-driven procedure, such as variable selection. Consequently, even when a sign reversal is observed, it becomes difficult to determine whether it reflects genuine statistical evidence or deliberate specification search. This difficulty arises from an intrinsic identification problem driven by the interaction between model uncertainty and high-dimensional search.

The hacker can construct a large auxiliary pool $X_A$ through several generic mechanisms that are widely available in practice: 
\begin{enumerate}
	\item Transformations of existing variables. This includes polynomial expansions and interaction terms of the original covariates $X$, which are commonly used to capture nonlinearities and interactions. Such constructions are ubiquitous across applications, including marketing analytics, social networks \citep{hao2014interaction}, and genomics \citep{cordell2009detecting}. 
	\item Incorporation of additional structured covariates. New variables can be introduced from extended or alternative data sources, such as different measures of inputs in productivity analysis \citep{syverson2011determines}, additional determinants in growth regressions \citep{sala2004determinants,ciccone2010determinants}, or supplementary probes in genomic studies \citep{dufva2005fabrication}. 
	\item Feature extraction from unstructured data. Covariates can also be constructed from text, images, or network representations. These approaches have become increasingly common in modern empirical work, as illustrated by applications involving social media content \citep{erevelles2016big}, political speeches \citep{gentzkow2019measuring}, product attributes in marketing \citep{wang2022attribute}, or graph- or image-based features \citep{hao2015image,shuman2013emerging}.  
\end{enumerate}

These strategies substantially expand the set of admissible covariates, often generating a high-dimensional and highly redundant pool $X_A$. This enlarged pool provides a rich collection of directions along which the baseline regression can be perturbed, thereby increasing the likelihood of achieving a desired sign pattern. Given the ease of constructing such large auxiliary pools, this raises the following question: Can the hacker identify a covariate $V$ whose inclusion induces a sign change in the target coefficients, thereby achieving the desired sign pattern through SHAVE? To address this question, it is necessary to formalize the richness of the auxiliary pool $X_A$. The following conditional independence assumption characterizes a setting in which sufficiently diverse directions are available for such perturbations, ensuring that the search over $V$ is not restricted to a narrow subset of the sample space.

\begin{assump}[Conditional Independence] \label{assump:ConditionalInd}
	There exists a subset of indices $\mathcal{J} \subseteq [p]$ with $|\mathcal{J}| = p_1$ such that, conditional on $(Y,X)$, the variables $\{X_{A,j}, j \in \mathcal{J}\}$ are mutually independent. Moreover, for each $j \in \mathcal{J}$, the normalized variable $ \frac{X_{A,j}}{\Vert X_{A,j} \Vert} $ has a density uniformly bounded away from zero on the unit sphere.
\end{assump}

\begin{remark}
	The conditional independence assumption can be relaxed. If the variables in $ X_{A} $ exhibit mild dependence, such as temporal or index-wise correlation satisfying suitable mixing conditions, the subsequent results continue to hold with additional technical adjustments. 
\end{remark}

Assumption \ref{assump:ConditionalInd} ensures that the realizations of variables in $ X_{A} $ are sufficiently dispersed over the unit sphere, in the sense that directions generated by a subset of auxiliary variables cover the space without degeneracy (see Lemma \ref{lma:NoneZeroDensity} in the Appendix \ref{appsub:proof2}). A key feature of this assumption is that it imposes minimal restrictions on the overall dependence structure within $X_{A}$. Only a subset of size $p_1$ is required to exhibit conditional independence, while the remaining variables may be arbitrarily dependent. Consequently, the condition is substantially weaker than classical high-dimensional regularity assumptions such as the irrepresentable condition or the restricted isometry property. 

Under this condition, the auxiliary pool provides a sufficiently rich collection of directions, so that the hacker's search over $V \in X_A$ becomes increasingly effective as the pool size grows. Under Assumption \ref{assump:ConditionalInd}, as the size of the auxiliary pool increases, the hacker can identify at least one variable $V \in X_A$ that induces the desired sign pattern with probability approaching one. Let $\mathcal{I}_1, \mathcal{I}_2 \subseteq [q]$ denote two fixed complementary index sets.

\begin{thm} \label{thm:HackOnX}
	Suppose Assumption \ref{assump:ConditionalInd} holds. Then, given $(Y, X)$, as $p_1 \to \infty$, with probability tending to one, there exists at least one $V \in X_A$ such that
	\begin{align} \label{eq:signchange}
		\sgn \hat{\beta}_{\boldsymbol{X}, j} = - \sgn \hat{\alpha}_{\boldsymbol{X}, j}, \quad j \in \mathcal{I}_1 \text{ and } \sgn \hat{\beta}_{\boldsymbol{X}, j} = \sgn \hat{\alpha}_{\boldsymbol{X}, j} \quad j \in \mathcal{I}_2. 
	\end{align} 
\end{thm}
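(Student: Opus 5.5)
Conditional on $(Y,X)$, I plan to invoke Theorem~\ref{thm:NonZeroSet} with $\vec{y}=Y$ and $\boldsymbol{X}=X$. It provides a set $\mathcal{V}$ of positive Lebesgue measure on which the sign pattern \eqref{eq:signchange} holds. The key observation is that the OLS coefficients $\hat{\alpha}_X$ are unchanged when $V$ is rescaled by a positive constant, and for a negative constant only $\hat{\alpha}_V$ flips sign. Hence the event $\{$the pattern \eqref{eq:signchange} holds$\}$ depends on $V$ only through the direction $V/\Vert V\Vert$, and it is invariant under $V\mapsto -V$.

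The first step is to convert $\mathcal{V}$ into a set of directions of positive surface measure. Set $\mathcal{C}=\{t\vec{v}: t>0,\ \vec{v}\in\mathcal{V}\}$ and $\mathcal{D}=\mathcal{C}\cap\mathbb{S}^{n-1}$. Because $\mathcal{V}$ is the punctured ball $B(XA+bY,\varepsilon_0)\setminus\{XA+bY\}$, it contains an open ball. I need to check that $\varepsilon_0<\Vert XA+bY\Vert$. This holds because $\Vert XA+bY\Vert\ge \Vert M_X(XA+bY)\Vert=b\Vert M_XY\Vert\ge \kappa_{\max}(A)\Vert M_XY\Vert>\eta\kappa_{\min}(A)\Vert M_XY\Vert$. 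So the ball avoids the origin, and its radial projection $\mathcal{D}$ is a nonempty open spherical cap. Therefore $\sigma(\mathcal{D})=:c_0>0$, where $\sigma$ is surface measure on $\mathbb{S}^{n-1}$. The constant $c_0$ depends only on $(Y,X)$, which is fixed by conditioning.

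Next, Assumption~\ref{assump:ConditionalInd} gives a lower bound $f\ge m>0$ on the density of each normalized $X_{A,j}$, $j\in\mathcal{J}$. For each such $j$,
\[
\Pr\bigl(X_{A,j}/\Vert X_{A,j}\Vert\in\mathcal{D}\mid Y,X\bigr)\ge m c_0=:\delta>0 .
\]
By conditional independence of $\{X_{A,j}\}_{j\in\mathcal{J}}$,
\[
\Pr\bigl(\text{no } j\in\mathcal{J} \text{ has } X_{A,j}/\Vert X_{A,j}\Vert\in\mathcal{D}\mid Y,X\bigr)\le(1-\delta)^{p_1}\to 0
\]
as $p_1\to\infty$. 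On the complementary event, pick $V=X_{A,j}$ with direction in $\mathcal{D}$. Then $tV\in\mathcal{V}$ for some $t>0$, and Theorem~\ref{thm:NonZeroSet} together with the scale invariance gives \eqref{eq:signchange}. A zero vector, or a $V$ lying in $\mathrm{span}(X)$, occurs with probability zero under the density assumption.

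The main obstacle is the scale-invariance step, since Theorem~\ref{thm:NonZeroSet} is stated for vectors in $\mathbb{R}^n$ while Assumption~\ref{assump:ConditionalInd} controls only directions. I will verify it directly: $\hat{\alpha}_X(tV)=(X'M_{tV}X)^{-1}X'M_{tV}Y$, and $M_{tV}=M_V$ for every $t\neq0$, so $\hat{\alpha}_X(tV)=\hat{\alpha}_X(V)$. I also need to confirm that $\mathcal{D}$ has positive measure and not just that it is nonempty. This follows because radial projection maps an open set avoiding the origin onto an open subset of the sphere.
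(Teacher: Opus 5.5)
Your proposal is correct and follows essentially the same route as the paper. Because $\hat{\alpha}_X$ depends on $V$ only through $\mathrm{span}(X,V)$, the sign-change event depends only on the direction $V/\Vert V\Vert$; the radial projection of the set $\mathcal{V}$ from Theorem~\ref{thm:NonZeroSet} then has positive surface measure on $\mathbb{S}^{n-1}$, and the density lower bound together with conditional independence bounds the failure probability by $(1-\delta)^{p_1}\to 0$. Two of your details are precisely what make the $(1-\delta)^{p_1}$ bound rigorous: your check that $\varepsilon_0<\Vert XA+bY\Vert$, so the ball misses the origin, and your reading of ``uniformly bounded away from zero'' as a common bound $m$ over $j\in\mathcal{J}$.
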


\begin{remark} \label{remark:HackOnV}
	The inclusion of the auxiliary variable $V$ also determines the sign of the auxiliary coefficient. Under the normalization $b > 0$, the coefficient on $V$ satisfies $\hat{\alpha}_V > 0$, or equivalently $\sgn \hat{\alpha}_V = \sgn \frac{1}{b}$. This sign convention is immaterial for the main result. Replacing $V$ by $-V$ leaves the column space spanned by $(X,V)$ unchanged and therefore leaves the coefficients on $X$ unchanged, while it reverses only the coefficient on the auxiliary variable. Thus the normalization $b > 0$ is made without loss of substantive content for Theorem \ref{thm:HackOnX}.  
\end{remark}

\begin{remark}[Invariance to Model Perturbation]
	Given $(Y, X)$, the constructed set $\mathcal{V}$ inducing sign changes is deterministic. Although membership in $\mathcal{V}$ provides only a sufficient condition for sign change, it plays the same role for both directions of model perturbation. Starting from a model without $V$, adding a variable whose realization lies in $\mathcal{V}$ induces a sign change; likewise, starting from a model that includes $V$, omitting such a variable also induces a sign change. In this sense, the conditions for sign change under inclusion and omission are characterized by the same geometric object determined by $(Y, X)$. This yields a unified geometric perspective, under which existing conditions in the literature can be viewed as describing different subsets or special cases of this construction.
\end{remark}

\begin{remark}
	Although selection consistency guarantees asymptotic identification of relevant variables, it offers limited protection against sign manipulation in finite samples. In practice, sample sizes are often insufficient for such asymptotic guarantees to be effective. Moreover, the information asymmetry between the hacker and the audience allows the hacker to conceal the set of explored variables, making such manipulation intrinsically difficult to detect.   
\end{remark}

Beyond altering the signs of coefficients, the hacker can further identify a variable $V$ such that the resulting coefficients $\hat{\alpha}_X$ are statistically significant, thereby making SHAVE more deceptive with high probability. This yields a positive-measure refinement of the sign change set $\mathcal{V}$.

\begin{thm} \label{thm:hackontstat}
	Let $\delta \in (0,1)$ be a given significance level. For any constant $c > 0$, there exists a subset $\mathcal{V}_{\delta,c} \subseteq \mathcal{V}$ with positive Lebesgue measure such that, for any $V \in \mathcal{V}_{\delta,c} $, the sign change relation \eqref{eq:signchange} holds and 
	\begin{align*}
		\min_{j \in [q]} |T_j| \geq c \cdot t_{n-q-1,\delta/2},
	\end{align*}
	where $T_j$ denotes the $t$-statistic associated with $\hat{\alpha}_{X,j}$. Furthermore, under Assumption \ref{assump:ConditionalInd}, with probability going to one as $p_1 \to \infty$, there exists at least one $V \in X_A$ satisfying these properties. 
\end{thm}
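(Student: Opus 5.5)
The plan is to sharpen the construction of Theorem~\ref{thm:NonZeroSet}. We keep the same $A$ and $b$ and shrink the radius $\varepsilon$. For $\vec v=\boldsymbol{X}A+b\vec y+\varepsilon\vec u$ we have the exact identity
\begin{align*}
	\vec y=\tfrac1b\vec v-\boldsymbol{X}\tfrac{A}{b}-\tfrac{\varepsilon}{b}\vec u .
\end{align*}
Hence the OLS fit of $\vec y$ on $Z=(\boldsymbol{X},\vec v)$ has residual norm at most $\varepsilon\Vert\vec u\Vert/b=\varepsilon/b$, since OLS minimizes the residual norm over all coefficient vectors. Consequently $\hat\sigma\le \varepsilon/(b\sqrt{n-q-1})$. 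Every $\vec v$ constructed this way lies in $\mathcal{V}$ once $\varepsilon\le\varepsilon_0$, so Theorem~\ref{thm:NonZeroSet} already gives the sign relation \eqref{eq:signchange}. In particular each $\hat\alpha_{X,j}$ has the fixed sign $-\sgn A_j$. We also want its magnitude bounded away from zero. Since $\hat\alpha_{X,j}$ is continuous in $\varepsilon\vec u$ and equals $-A_j/b\neq0$ at $\varepsilon=0$, we get $|\hat\alpha_{X,j}|\ge |A_j|/(2b)$ uniformly in $\vec u\in\mathbb S^{n-1}$ for $\varepsilon$ small. Compactness of the sphere makes the continuity uniform.

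Next we bound the standard error. We use $\mathrm{se}(\hat\alpha_{X,j})=\hat\sigma/\Vert M_{Z_{-j}}\vec x_j\Vert$, where $Z_{-j}=(\boldsymbol{X}_{-j},\vec v)$. At $\varepsilon=0$ this design becomes $(\boldsymbol{X}_{-j},\boldsymbol{X}A+b\vec y)$, which spans the same space as $(\boldsymbol{X}_{-j},A_j\vec x_j+b\vec y)$. We claim $\vec x_j$ is not in this span. If it were, then $b\vec y$ would lie in $\mathrm{span}(\boldsymbol{X})$, contradicting $\Vert M_{\boldsymbol{X}}\vec y\Vert>0$. So the limiting quantity $d_j:=\Vert M_{\boldsymbol{X}_{-j},A_j\vec x_j+b\vec y}\,\vec x_j\Vert$ is strictly positive.

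The main obstacle is passing from this limit to a uniform bound: we need $\Vert M_{Z_{-j}}\vec x_j\Vert\ge d_j/2$ for all $\vec u\in\mathbb S^{n-1}$ and all small $\varepsilon$. I would handle it as follows. Because $\vec v\mapsto P_{Z_{-j}}$ is continuous wherever $Z_{-j}$ has full column rank, and full rank is an open condition, there is an $\varepsilon_1>0$ on which the bound holds. An alternative is to write $\Vert M_{Z_{-j}}\vec x_j\Vert^2$ as a ratio of Gram determinants, which is continuous and has a positive limit.

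Combining the three bounds gives
\begin{align*}
	|T_j|\ \ge\ \frac{|A_j|}{2b}\cdot\frac{d_j}{2}\cdot\frac{b\sqrt{n-q-1}}{\varepsilon}\ =\ \frac{C_j}{\varepsilon}.
\end{align*}
Set $\varepsilon_2=\min\bigl(\varepsilon_0,\varepsilon_1,\min_j C_j/(c\,t_{n-q-1,\delta/2})\bigr)$. Define $\mathcal V_{\delta,c}=\{\boldsymbol{X}A+b\vec y+\varepsilon\vec u:\vec u\in\mathbb S^{n-1},0<\varepsilon\le\varepsilon_2\}$. This is a punctured ball, so it has positive Lebesgue measure, it is contained in $\mathcal V$, and every element satisfies both requirements of the theorem.

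For the probabilistic part, I would reuse the argument behind Theorem~\ref{thm:HackOnX}. Signs and $t$-statistics of the $X$-coefficients are invariant under replacing $V$ by $sV$ for $s>0$, because this leaves the column space and $\hat\alpha_X$ unchanged. Let $w=\boldsymbol{X}A+b\vec y$; note $w\neq0$, since otherwise $b\vec y\in\mathrm{span}(\boldsymbol{X})$. Because $\Vert w\Vert>0$, the radial directions $\{\vec v/\Vert\vec v\Vert:\vec v\in\mathcal V_{\delta,c}\}$ cover a spherical cap of positive surface measure. Any direction in that cap can be rescaled into $\mathcal V_{\delta,c}$. By Assumption~\ref{assump:ConditionalInd}, each $X_{A,j}/\Vert X_{A,j}\Vert$ with $j\in\mathcal J$ lands in the cap with conditional probability at least $\rho>0$, by the density lower bound (Lemma~\ref{lma:NoneZeroDensity}). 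The landings are conditionally independent across $j\in\mathcal J$. Therefore the probability that no $j$ lands in the cap is at most $(1-\rho)^{p_1}\to0$, which proves the claim.
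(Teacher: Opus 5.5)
Your proposal is correct and appears to follow essentially the same route as the paper (whose proof of this theorem is not reproduced in the text I was given). You shrink the radius of the punctured ball $\mathcal{V}$ about $\boldsymbol{X}A+b\vec y$ so that the residual norm, at most $\varepsilon/b$, vanishes while $|\hat\alpha_{X,j}|$ and $\Vert M_{Z_{-j}}\vec x_j\Vert$ stay bounded away from zero, then pass to $X_A$ via scale invariance in $V$ and the density lower bound of Assumption~\ref{assump:ConditionalInd}. Two small points to tidy: discard the Lebesgue-null set $\mathcal{V}_{\delta,c}\cap\mathrm{span}(\boldsymbol{X},\vec y)$, where the residual is exactly zero and $T_j$ is undefined, and note that your single constant $\rho$ needs the density lower bound in Assumption~\ref{assump:ConditionalInd} to be uniform over $j\in\mathcal{J}$.
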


\begin{remark}
	While it is possible to find a variable $V$ that produces arbitrarily large $t$-statistics, such extreme values are typically not required. In practice, it suffices to identify a variable that pushes the $t$-statistic just beyond the critical threshold. This type of marginal manipulation reflects the phenomenon of $p$-hacking, as discussed in Appendix \ref{app:connections}. Results obtained in this way can appear more convincing precisely because the induced significance is subtle rather than extreme.
\end{remark}

While the sign change region ensures selective reversals and statistical significance in $\hat{\alpha}_X$, a further refinement yields a positive-measure subset on which the auxiliary coefficient $\hat{\alpha}_V$ is also statistically significant. Moreover, under Assumption \ref{assump:ConditionalInd}, such variables can be found within $X_A$ with probability tending to one. In particular, this refinement is taken within $\mathcal{V}_{\delta,c}$, so that both sets of $t$-statistic conclusions hold simultaneously on the resulting subset. 

\begin{corollary}\label{cor2}
	Under the same conditions as in Theorem 
	\ref{thm:hackontstat}, for any constant $c > 0$, there exists a subset $\mathcal{V}^{1}_{\delta,c} \subseteq \mathcal{V}_{\delta,c}$ with positive Lebesgue measure such that, for any $V \in \mathcal{V}^{1}_{\delta,c}$, the inclusion of $V$ satisfies $\sgn \hat{\alpha}_V = \sgn \frac{1}{b}$ and 
	\begin{align*}
		|T_V| \geq c \cdot t_{n-q-1,\delta/2},
	\end{align*}
	where $T_V$ denotes the $t$-statistic associated with $\hat{\alpha}_V$. Furthermore, under Assumption \ref{assump:ConditionalInd}, as $p_1 \to \infty$, with probability tending to one, there exists at least one $V \in X_A$ satisfying these properties. 
\end{corollary}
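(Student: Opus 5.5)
The plan is to use the explicit parametrization $\vec v=\boldsymbol X A+b\vec y+\varepsilon\vec u$ from Theorem \ref{thm:NonZeroSet}. Solving for $\vec y$ gives
\begin{align*}
\vec y=\boldsymbol X(-A/b)+\tfrac1b\vec v-\tfrac{\varepsilon}{b}\vec u .
\end{align*}
Regressing $\vec y$ on $(\boldsymbol X,\vec v)$ is therefore governed entirely by the small term $-(\varepsilon/b)\vec u$. In particular,
\begin{align*}
\hat\alpha_V=\frac1b-\frac{\varepsilon}{b}\frac{\langle M_{\boldsymbol X}\vec v,\vec u\rangle}{\Vert M_{\boldsymbol X}\vec v\Vert^2},
\qquad
\hat e_{XV}=-\frac{\varepsilon}{b}M_{\boldsymbol X,\vec v}\vec u .
\end{align*}

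\textbf{Step 1: sign of $\hat\alpha_V$.} Since $\kappa_{\min}(A)\le\kappa_{\max}(A)\le b$ and $\eta<1/4$, we have $\varepsilon\le\eta\Vert M_{\boldsymbol X}\vec y\Vert b$. Hence
\begin{align*}
\Vert M_{\boldsymbol X}\vec v\Vert\ge b\Vert M_{\boldsymbol X}\vec y\Vert-\varepsilon\ge\tfrac34 b\Vert M_{\boldsymbol X}\vec y\Vert .
\end{align*}
The correction term is then at most
\begin{align*}
\frac{\varepsilon}{b\Vert M_{\boldsymbol X}\vec v\Vert}\le\frac{4\eta\kappa_{\min}(A)}{3b^2}\le\frac{4\eta}{3b}<\frac{1}{3b}.
\end{align*}
So $\hat\alpha_V\ge 2/(3b)>0$, i.e.\ $\sgn\hat\alpha_V=\sgn\frac1b$, for every $\vec v\in\mathcal V$. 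This is consistent with Remark \ref{remark:HackOnV}.

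\textbf{Step 2: $t$-statistic of $\hat\alpha_V$.} Write $\hat\sigma=\Vert\hat e_{XV}\Vert/\sqrt{n-q-1}\le(\varepsilon/b)/\sqrt{n-q-1}$. Using $\mathrm{se}(\hat\alpha_V)=\hat\sigma/\Vert M_{\boldsymbol X}\vec v\Vert$ and the bounds from Step 1,
\begin{align*}
|T_V|\ge\frac{2}{3b}\cdot\frac34 b\Vert M_{\boldsymbol X}\vec y\Vert\cdot\frac{b\sqrt{n-q-1}}{\varepsilon}=\frac{b\Vert M_{\boldsymbol X}\vec y\Vert\sqrt{n-q-1}}{2\varepsilon}.
\end{align*}
This diverges as $\varepsilon\downarrow0$. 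Choosing
\begin{align*}
\varepsilon_1=\min\Bigl\{\varepsilon_0,\ \frac{b\Vert M_{\boldsymbol X}\vec y\Vert\sqrt{n-q-1}}{2c\,t_{n-q-1,\delta/2}}\Bigr\}
\end{align*}
gives $|T_V|\ge c\,t_{n-q-1,\delta/2}$ on the punctured ball $B_1=\{\boldsymbol XA+b\vec y+\varepsilon\vec u:0<\varepsilon\le\varepsilon_1\}$. The null set where $M_{\boldsymbol X,\vec v}\vec u=0$ must be excluded, since there the residual vanishes.

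\textbf{Step 3: intersect with $\mathcal V_{\delta,c}$.} Set $\mathcal V^1_{\delta,c}=\mathcal V_{\delta,c}\cap B_1$. The main obstacle is showing this intersection has positive measure, because Theorem \ref{thm:hackontstat} only asserts that $\mathcal V_{\delta,c}$ has positive measure. I would re-examine that construction. The same computation shows the $T_j$ also scale like $1/\varepsilon$, since the residual norm is $O(\varepsilon)$ while $\hat\alpha_{X,j}$ stays near $-A_j/b$ and the standard errors involve $\Vert M_{\boldsymbol X_{-j},\vec v}\boldsymbol X_j\Vert$, which is bounded below. So $\mathcal V_{\delta,c}$ can be taken to contain a punctured ball $\{0<\varepsilon\le\varepsilon_2\}$. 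Then $\mathcal V^1_{\delta,c}$ contains the punctured ball of radius $\min(\varepsilon_1,\varepsilon_2)$, which has positive Lebesgue measure.

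\textbf{Step 4: probabilistic statement.} Signs and $t$-statistics are invariant under $\vec v\mapsto s\vec v$ with $s>0$. So the event depends only on the direction $X_{A,j}/\Vert X_{A,j}\Vert$. The positive cone generated by the ball in Step 3 meets $\mathbb S^{n-1}$ in a set of positive surface measure, as in Lemma \ref{lma:NoneZeroDensity}. Under Assumption \ref{assump:ConditionalInd}, each $j\in\mathcal J$ lands in this set with conditional probability at least some $\rho>0$, independently across $j$. Hence the probability that no $j\in\mathcal J$ lands there is at most $(1-\rho)^{p_1}\to0$, exactly as in the proof of Theorem \ref{thm:HackOnX}.
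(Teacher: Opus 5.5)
Your proposal is correct and follows essentially the same route as the paper. Writing $\vec y=\boldsymbol X(-A/b)+\vec v/b-(\varepsilon/b)\vec u$ shows that on all of $\mathcal V$ the coefficient $\hat\alpha_V$ stays bounded away from zero with the sign of $1/b$, while the residual (hence $\hat\sigma$) is $O(\varepsilon)$; so $|T_V|\gtrsim 1/\varepsilon$, shrinking the radius of the punctured-ball construction underlying $\mathcal V_{\delta,c}$ gives $\mathcal V^1_{\delta,c}$, and the probabilistic part follows as in Theorem \ref{thm:HackOnX} via positive-scale invariance, the positive surface measure of the resulting spherical cap, and Assumption \ref{assump:ConditionalInd}.
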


Model specification tests, such as the $F$-test and the Wald test, are commonly used to assess whether groups of regression coefficients are jointly equal to zero, thereby supporting or refuting a given model specification. The hacker can exploit these tests by selecting a variable $V$ such that any given subvector of $\alpha_X$ appears statistically significant, leading to rejection of the corresponding null hypothesis with high probability. This further enhances the apparent credibility of the manipulated model. A further refinement can be constructed within $\mathcal{V}_{\delta,c}^1$, thereby ensuring that the sign change relation, the lower bounds on the $t$-statistics for both $\hat{\alpha}_X$ and $\hat{\alpha}_V$, and the desired $F$-test rejection all hold simultaneously.

\begin{corollary}\label{cor:hackonFtest}
	Let $I \subseteq \mathcal{I}_1 \cup \mathcal{I}_2$ be any nonempty index set with $|I| = s$ and consider the augmented model \eqref{model:SignChangeModel}. For any significance level $\delta \in (0,1)$ and any constant $c > 0$, there exists a measurable subset $\mathcal{V}_{\delta, I, c} \subseteq \mathcal{V}^{1}_{\delta,c}$ with positive Lebesgue measure such that, for every $V \in \mathcal{V}_{\delta, I, c}$, the sign change relations \eqref{eq:signchange} hold and the $F$-statistic for testing $H_0: \alpha_{X, I} = 0$ satisfies 
	\begin{align*}
		F_I \geq c \cdot F_{\delta; s, n-q-1}.
	\end{align*}
	Furthermore, under Assumption \ref{assump:ConditionalInd}, as $p_1 \to \infty$, with probability approaching one, there exists at least one $V \in X_A$ satisfying these properties. 
\end{corollary}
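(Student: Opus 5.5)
We follow the pattern of the proofs of Theorem \ref{thm:hackontstat} and Corollary \ref{cor2}. We refine the construction by shrinking the noise scale, and then use Assumption \ref{assump:ConditionalInd} to show that some auxiliary variable lands in the refined set. The key observation is that the $F$-statistic for $H_0:\alpha_{X,I}=0$ in model \eqref{model:SignChangeModel} can be written as
\begin{align*}
F_I = \frac{\hat{\alpha}_{X,I}' \bigl[(W'W)^{-1}_{II}\bigr]^{-1}\hat{\alpha}_{X,I}/s}{\Vert \hat e_{XV}\Vert^2/(n-q-1)}, \qquad W=(X,V).
\end{align*}
Its numerator is bounded below by the squared $t$-statistics of the same model, via the largest eigenvalue of $(W'W)^{-1}_{II}$. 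So the same mechanism that inflates every $|T_j|$ should inflate $F_I$: the residual $\Vert \hat e_{XV}\Vert$ shrinks while the coefficients stay bounded away from zero.

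First, we make this explicit for $V = XA + b Y + \varepsilon u$. Since $Y = (V - XA - \varepsilon u)/b$, we have $\hat e_{XV} = M_{X,V} Y = -(\varepsilon/b) M_{X,V} u$, so $\Vert \hat e_{XV}\Vert \le \varepsilon/b$. As $\varepsilon \to 0$, the coefficients satisfy $\hat{\alpha}_X \to -A/b$ and $\hat{\alpha}_V \to 1/b$. The projected design $M_{X}V \to b\, M_X Y \neq 0$, so $W'W$ converges to a nonsingular matrix and $(W'W)^{-1}_{II}$ stays bounded in operator norm. Hence the numerator of $F_I$ converges to $\frac{1}{s}(A_I/b)'[(W_0'W_0)^{-1}_{II}]^{-1}(A_I/b)$, where $W_0 = (X, XA + bY)$. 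This limit is strictly positive because $A_j\neq 0$ for all $j$. Meanwhile the denominator is $O(\varepsilon^2)$, so $F_I \gtrsim C/\varepsilon^2$ uniformly in $u\in\mathbb{S}^{n-1}$, with $C>0$ depending only on $(Y,X,A,b)$.

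Second, we define $\mathcal{V}_{\delta,I,c}$ as the intersection of $\mathcal{V}^1_{\delta,c}$ with $\{XA+bY+\varepsilon u: u\in\mathbb{S}^{n-1},\ 0<\varepsilon\le \varepsilon_2\}$. Here $\varepsilon_2$ is chosen so that $C/\varepsilon_2^2 \ge c\,F_{\delta;s,n-q-1}$. From the earlier proofs, $\mathcal{V}^1_{\delta,c}$ is itself of this form for some threshold $\varepsilon_1$; if instead it is only known to have positive measure, we intersect with this thin shell. Taking $\varepsilon \le \min(\varepsilon_1,\varepsilon_2)$ gives a punctured ball around $XA+bY$. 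That set is open and nonempty, hence has positive Lebesgue measure. On it, the sign relations \eqref{eq:signchange} hold by Theorem \ref{thm:NonZeroSet}, and the $t$-bounds hold by Theorem \ref{thm:hackontstat} and Corollary \ref{cor2}.

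Third, we handle the probabilistic claim as in Theorem \ref{thm:HackOnX}. The sign pattern and all the $t$- and $F$-statistics are scale invariant in $V$ (scaling $V$ rescales $\hat\alpha_V$ only). So the relevant event is that the direction $X_{A,j}/\Vert X_{A,j}\Vert$ falls into the cone $\{\lambda v: \lambda>0, v\in\mathcal{V}_{\delta,I,c}\}$. This cone meets $\mathbb{S}^{n-1}$ in a set of positive surface measure $\sigma>0$. By Assumption \ref{assump:ConditionalInd}, each $j\in\mathcal{J}$ hits it with probability at least $m\sigma>0$, where $m$ is the lower density bound. By conditional independence, the probability that none hits is at most $(1-m\sigma)^{p_1}\to 0$.

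The main obstacle is the uniform lower bound on the numerator of $F_I$. This requires checking that $W'W$ stays uniformly well-conditioned over all $u$ and all small $\varepsilon$. We expect to handle it by continuity of $(W'W)^{-1}$ on a compact neighborhood of $W_0$, where $W_0$ has full rank because $\Vert M_X Y\Vert>0$.
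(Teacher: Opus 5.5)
Your proposal is correct and follows essentially the same route as the paper: shrinking $\varepsilon$ inside $\mathcal{V}$ gives $\Vert\hat e_{XV}\Vert\le\varepsilon/b$ while the coefficient part of $F_I$ stays bounded away from zero (alternatively, $F_I\ge\max_{j\in I}T_j^2/s$ lets you borrow Theorem~\ref{thm:hackontstat} with an enlarged constant), and the high-probability claim is the same scale-invariance and independent-covering argument used for Theorem~\ref{thm:HackOnX} under Assumption~\ref{assump:ConditionalInd}. You should drop your fallback clause, because intersecting an arbitrary positive-measure $\mathcal{V}^1_{\delta,c}$ with a thin shell around $XA+bY$ can be Lebesgue-null; what you need is that $\mathcal{V}^1_{\delta,c}$ contains every $XA+bY+\varepsilon u$ with $\varepsilon$ below some threshold, which holds for the shrinking-$\varepsilon$ construction and which your own continuity argument justifies, since all $|T_j|$ and $|T_V|$ also diverge uniformly in $u$ as $\varepsilon\to 0$.
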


\begin{remark}
	For any $V \in \mathcal{V}_{\delta,I,c}$, an $F$-test comparing models with and without $V$ rejects the null hypothesis that the coefficient of $V$ is zero, thereby favoring the inclusion of $V$, due to the equivalence between the $t$- and $F$-tests for a single linear restriction.
\end{remark}

Based on the preceding results, the hacker implements SHAVE in a stylized manner. A baseline regression is first estimated using covariates selected according to prior beliefs, existing studies or standard variable selection procedures. When the resulting coefficients are deemed unsatisfactory, the model is subsequently manipulated by searching for an auxiliary variable $ V $ that delivers the desired outcome, as guaranteed by Theorem \ref{thm:NonZeroSet}. Incorporating such a $V$ ensures that the coefficients of interest attain the targeted signs and sufficiently large $t$-statistics, as established by Theorems \ref{thm:HackOnX} and \ref{thm:hackontstat}. To further enhance apparent credibility, the hacker conducts $F$-tests on subsets of coefficients to validate the model specification, which, under SHAVE, can be mechanically satisfied by Corollary \ref{cor:hackonFtest}.

This implementation of SHAVE is closely related to several well-documented empirical phenomena. By inducing sample dependence between the auxiliary variable $V$, the outcome $Y$ and the regressors $X$, SHAVE can reverse the estimated sign of coefficients of $X$ even when $V$ has no causal effect on $Y$, exemplifying \textbf{spurious correlation} \citep{aldrich1995correlations,fan2014challenges}. When such results are selectively reported, the resulting inflation of $t$-statistics or related test statistics reflects \textbf{$p$-hacking} \citep{simmons2011false} and contributes to \textbf{publication bias} \citep{sterling1959publication}. Moreover, post hoc analyses conducted after SHAVE that systematically favor particular conclusions may introduce \textbf{sponsorship bias} \citep{lesser2007relationship}, and in extreme cases, can raise concerns related to \textbf{research integrity}. Prominent examples of misleading or fabricated findings, such as the stem-cell and STAP cell scandals, illustrate how such practices can distort the scientific record. A detailed discussion of these connections is provided in the Appendix \ref{app:connections}. 

Despite its seemingly sound statistical presentation, the model produced by the hacker is fundamentally misleading. The auxiliary variable $V$ is selected precisely because its realization happens to satisfy the sign change condition, rather than due to any substantive relationship with the outcome. As a result, the apparent statistical evidence arises from targeted specification search rather than underlying structure. This issue may arise even in settings typically viewed as robust, such as randomized controlled trials (RCTs). In practice, full randomization is often infeasible, particularly in applications such as personalized medicine, where treatments are tailored to individual characteristics. In such settings, patient attributes, such as genetic markers or pre-existing conditions, may be correlated with treatment assignment. Conditioning on covariates selected from these attributes can therefore introduce distortions in treatment effect estimation. \citet{senn2018statistical} highlights related concerns, which can be viewed as practical manifestations of SHAVE.

\section{Numerical examples} \label{sec:EmpiricalIllustration}
In this section, numerical experiments are used to complement the theoretical analysis. We first examine whether SHAVE can be effectively implemented by the hacker. We then demonstrate that the constructed set $\mathcal{V}$ is sufficient to induce sign changes, as discussed in Section \ref{sec:SignChangePhenomenon}. Finally, we illustrate how SHAVE may manifest in practice from the hacker's perspective using a rat eye gene expression dataset. 

\subsection{Feasibility of SHAVE} \label{subsec:simulation}

To motivate the simulation from a concrete hacker’s perspective, we consider a stylized nutrition setting inspired by the sponsorship-bias literature.

Suppose a hacker backed by a food-industry sponsor with commercial interests in a particular dietary product purports to study the effects of the sponsored product ($x_1$) and fat intake ($x_2$) on an adverse cardiovascular outcome ($y$), such as serum LDL cholesterol or a coronary heart disease risk index. The hacker seeks a significantly negative coefficient on the sponsored product, together with a significantly positive coefficient on fat, thereby supporting the preferred narrative that fat, rather than the sponsored product, is the primary dietary culprit. He begins with the baseline regression of $y$ on $x_1$ and $x_2$, based on an independent sample $(y_i, x_{1i}, x_{2i})_{i=1}^{n}$ generated from
\begin{equation} \label{model:simulation}
	y_i =  x_{1i} \beta_1 + x_{2i} \beta_2 + e_i, \quad e_i \sim N(0, \sigma^2),
\end{equation}
where $n \in \{20, 30, 100\}$, $(\beta_1, \beta_2) \in \{(0.2, 0.2), (0.5, 0.5)\}$ and $\sigma^2 = 0.5$. Thus, by construction, both exposures have positive coefficients, corresponding to harmful effects on the adverse health outcome. The regressors $(x_{1i}, x_{2i})$ follow a bivariate normal distribution with mean zero and covariance matrix $\Sigma$ with entries $\Sigma_{ij} = \rho^{|i-j|}$, where $\rho \in \{0.1, 0.5, 0.9\}$.

Let $Y = (y_1, \dots, y_n)'$, $X_1 = (x_{11}, \dots, x_{1n})'$, $ X_2 = (x_{21}, \dots, x_{2n})' $ and $\hat{e}_X$ be the residual vector from the baseline regression
\begin{equation}\label{model:correctspecified}
	Y = X_1 \hat{\beta}_1 + X_2 \hat{\beta}_2 + \hat{e}_{X}. 
\end{equation}
When the baseline estimates have the correct signs, both $\hat{\beta}_1$ and $\hat{\beta}_2$ are positive, which does not support the sponsor’s preferred narrative. The hacker therefore searches over a large auxiliary pool $X_A$ for an additional covariate $V$ whose inclusion reverses the estimated effect of the sponsored product while preserving the positive effect of fat. 

In substantive terms, $X_A$ represents a rich collection of plausible auxiliary variables available to the hacker, including alternative dietary measures, behavioral indicators, health-related covariates, transformations of existing variables (possibly including linear combinations of $X_1$ and $X_2$) and other constructed features. In practice, such features may also be correlated with the original exposures. In the simulation, however, we abstract from these domain-specific details and idealize the high-dimensional search environment by generate $10^8$ candidate variables independently from a standard normal distribution. This construction satisfies Assumption \ref{assump:ConditionalInd}. The hacker then searches over $ X_A$ to identify a variable $V$ such that, in the augmented model 
\begin{equation}\label{model:incorrectspecified}
	Y = X_1 \hat{\alpha}_1 + X_2 \hat{\alpha}_2 + V \hat{\alpha}_V + \hat{e}_{XV}, 
\end{equation}
the coefficient on the sponsored product is negative and statistically significant, while that on fat is positive and statistically significant, both at the 10\% level. 

Within this framework, the following quantities are reported. First, although $(\hat{\beta}_1, \hat{\beta}_2)$ are expected to recover the sign pattern of $(\beta_1, \beta_2)$, sampling variability, especially in small samples, may lead to sign discrepancies. We therefore report the probability with which the baseline estimates $(\hat{\beta}_1, \hat{\beta}_2)$ match the true sign pattern, denoted as ``Valid'', to quantify sampling variability. Conditional on this event, we define SHAVE as the occurrence of the target sign configuration in the augmented model, given by $\sgn \hat{\alpha}_1 = - \sgn \hat{\beta}_1 $ and $ \sgn \hat{\alpha}_2 = \sgn \hat{\beta}_2$, and report its probability, denoted by ``SHAVE''. We also report the probability that there exists at least one auxiliary variable $V \in X_A$ that induces SHAVE (``Found''). In addition, we report the conditional probabilities with which $\hat{\alpha}_1$ and $\hat{\alpha}_2$ are statistically significant at the 10\% level when SHAVE occurs, denoted as Sig.$(\alpha_1)$ and Sig.$(\alpha_2)$, to assess the extent to which the manipulated results appear statistically convincing. All results are computed based on the first $10^6, 10^7$ and $10^8$ variables in $X_A$, averaged over 100 Monte Carlo replications.

\begin{table}[ht]
	\centering
	\caption{SHAVE frequency under independent auxiliary variables.}
	\resizebox{\textwidth}{!}{
		\begin{threeparttable}
			\begin{tabular}{cccrrrrrrrrrrrrr}
				\toprule
				\multirow{2}[4]{*}{$n$} & \multirow{2}[4]{*}{$(\beta_1, \beta_2)$} & \multirow{2}[4]{*}{$\rho$} & \multicolumn{1}{c}{\multirow{2}[4]{*}{Valid}} &
				\multicolumn{3}{c}{FOUND (\%)} & \multicolumn{3}{c}{SHAVE (\%)} & \multicolumn{3}{c}{Sig.($\alpha_1$) (\%)} & \multicolumn{3}{c}{Sig.($\alpha_2$) (\%)} \\
				\cmidrule(lr){5-7} \cmidrule(lr){8-10} \cmidrule(lr){11-13} \cmidrule(lr){14-16}          &       &       &       &
				\multicolumn{1}{c}{$10^6$} & \multicolumn{1}{c}{$10^7$} & \multicolumn{1}{c}{$10^8$} & \multicolumn{1}{c}{$10^6$} & \multicolumn{1}{c}{$10^7$} & \multicolumn{1}{c}{$10^8$} & \multicolumn{1}{c}{$10^6$} & \multicolumn{1}{c}{$10^7$} & \multicolumn{1}{c}{$10^8$} & \multicolumn{1}{c}{$10^6$} & \multicolumn{1}{c}{$10^7$} & \multicolumn{1}{c}{$10^8$} \\
				\midrule
				\multirow[t]{6}[1]{*}{20} & \multirow[t]{3}[1]{*}{(0.2,0.2)} & 0.1   & 73\%  & 94.5  & 98.6  & 100.0 & 1.989 & 1.991 & 1.992 & 0.034 & 0.052 & 0.047 & 43.196 & 44.882 & 45.922 \\
				&       & 0.5   & 71\%  & 98.6  & 98.6  & 98.6  & 2.820 & 2.822 & 2.821 & 0.045 & 0.056 & 0.057 & 53.926 & 54.092 & 54.074 \\
				&       & 0.9   & 40\%  & 100.0 & 100.0 & 100.0 & 6.370 & 6.374 & 6.375 & 0.063 & 0.046 & 0.053 & 26.835 & 26.622 & 26.797 \\
				& \multirow[t]{3}[0]{*}{(0.5,0.5)} & 0.1   & 100\% & 56.0  & 80.0  & 89.0  & 0.236 & 0.236 & 0.236 & 0.022 & 0.525 & 0.076 & 49.152 & 71.086 & 78.163 \\
				&       & 0.5   & 97\%  & 76.3  & 91.8  & 95.9  & 0.072 & 0.072 & 0.072 & 0.040 & 0.033 & 0.049 & 73.039 & 87.609 & 92.824 \\
				&       & 0.9   & 79\%  & 100.0 & 100.0 & 100.0 & 2.752 & 2.750 & 2.750 & 0.060 & 0.052 & 0.064 & 86.883 & 86.727 & 86.809 \\
				\multirow[t]{6}[0]{*}{30} & \multirow[t]{3}[0]{*}{(0.2,0.2)} & 0.1   & 86\%  & 81.4  & 91.9  & 97.7  & 0.754 & 0.753 & 0.753 & 0.004 & 0.006 & 0.005 & 48.993 & 56.119 & 58.756 \\
				&       & 0.5   & 82\%  & 85.4  & 93.9  & 97.6  & 1.660 & 1.659 & 1.659 & 0.003 & 0.005 & 0.005 & 63.076 & 69.472 & 72.517 \\
				&       & 0.9   & 49\%  & 100.0 & 100.0 & 100.0 & 3.361 & 3.361 & 3.362 & 0.010 & 0.010 & 0.010 & 36.681 & 36.484 & 36.497 \\
				& \multirow[t]{3}[0]{*}{(0.5,0.5)} & 0.1   & 100\% & 15.0  & 22.0  & 36.0  & 0.000 & 0.000 & 0.000 & 0.000 & 0.001 & 0.002 & 13.811 & 21.090 & 34.965 \\
				&       & 0.5   & 100\% & 24.0  & 38.0  & 50.0  & 0.109 & 0.109 & 0.109 & 0.000 & 0.005 & 0.001 & 23.750 & 37.973 & 49.724 \\
				&       & 0.9   & 91\%  & 73.6  & 94.5  & 95.6  & 0.894 & 0.894 & 0.893 & 0.002 & 0.095 & 0.027 & 73.625 & 94.414 & 95.447 \\
				\multirow[t]{6}[1]{*}{100} & \multirow[t]{3}[0]{*}{(0.2,0.2)} & 0.1   & 98\%  & 8.2   & 13.3  & 17.3  & 0.000 & 0.000 & 0.000 & 0.000 & 0.000 & 0.000 & 6.461 & 11.353 & 15.396 \\
				&       & 0.5   & 98\%  & 12.2  & 20.4  & 28.6  & 0.163 & 0.163 & 0.163 & 0.000 & 0.000 & 0.000 & 12.245 & 20.408 & 28.569 \\
				&       & 0.9   & 83\%  & 55.4  & 67.5  & 75.9  & 0.174 & 0.174 & 0.175 & 0.000 & 0.000 & 0.000 & 52.102 & 63.697 & 70.722 \\
				& \multirow[t]{3}[1]{*}{(0.5,0.5)} & 0.1   & 100\% & 0.0   & 0.0   & 0.0   & 0.000 & 0.000 & 0.000 & 0.000 & 0.000 & 0.000 & 0.000 & 0.000 & 0.000 \\
				&       & 0.5   & 100\% & 0.0   & 0.0   & 0.0   & 0.000 & 0.000 & 0.000 & 0.000 & 0.000 & 0.000 & 0.000 & 0.000 & 0.000 \\
				&       & 0.9   & 100\% & 3.0   & 8.0   & 11.0  & 0.002 & 0.003 & 0.003 & 0.000 & 0.000 & 0.000 & 3.000 & 8.000 & 11.000 \\
				\bottomrule
			\end{tabular}%
			\vspace{0.2ex}
			{\raggedright Note: ``Valid'' denotes the event that both baseline coefficients have the correct signs, i.e., $\sgn \hat{\beta}_j = \sgn \beta_j$ for $j = 1, 2$. ``FOUND'' denotes the probability that there exists at least one $V \in X_A$ that induces SHAVE. ``SHAVE'' reports the probability that $\sgn \hat{\alpha}_1 = - \sgn \hat{\beta}_1 $ and $ \sgn \hat{\alpha}_2 = \sgn \hat{\beta}_2$, conditional on ``Valid''. Sig.($\alpha_1$) and Sig.($\alpha_2$) denote the proportion of statistically significant estimates (10\% level) for $\hat{\alpha}_1$ and $\hat{\alpha}_2$, respectively, conditional on ``SHAVE''. Entries are percentages based on $10^6$ -- $10^8$ generated realizations of auxiliary variables over 100 Monte Carlo replications. \par}
		\end{threeparttable}
	}
	\label{tab:IndV}%
\end{table}%

Table \ref{tab:IndV} reports the results. The probability of identifying at least one auxiliary variable that induces SHAVE increases with the number of candidate variables and approaches one in many configurations, as shown in the column ``FOUND''. At the same time, SHAVE occurs with positive frequency across nearly all configurations, and these frequencies remain essentially unchanged across different sizes of auxiliary variables, as indicated by column ``SHAVE''. The effect of sample size is especially clear: as $n$ increases, the frequency of SHAVE declines substantially, indicating that larger samples make the hacking task much more difficult. A stronger signal has a similar stabilizing effect. Conditional on SHAVE events, both $\hat{\alpha}_1$ and $\hat{\alpha}_2$ exhibit nontrivial probabilities of statistical significance at the 10\% level. These probabilities generally increase as more auxiliary variables are searched, although the significance frequency of the sign-reversing coefficient $\hat{\alpha}_1$ is more variable in small samples. 
Note that this illustration sets a tougher goal of simultaneously enforcing the sign of $X_1$ and $X_2$. Due to computational constraints, we restrict the search to at most $10^8$ auxiliary variables. With more and more variables searched, the hacking success probability will eventually be 1. 

The role of correlation is more subtle. The increase in SHAVE under stronger correlation does not align with the construction of the set $\mathcal{V}$ in Eq. \eqref{eq:definesetV}, under which stronger dependence shrinks the admissible perturbation region. This discrepancy arises because $\mathcal{V}$ characterizes only a sufficient subset of perturbations that induce sign changes. A full characterization of the necessary set for sign reversal, particularly how dependence among regressors shapes the geometry of the joint sign change region, requires further analysis and is left for future work. 

Note that the original estimates do not always recover the true sign pattern, as the frequency of ``Valid'' falls below one in some configurations. This implies that the baseline regression may occasionally produce coefficient estimates that align with the hacker's desired signs due to sampling variability, even when they contradict the true parameter values, potentially creating a misleading impression without any manipulation (though it is unlikely the coefficients are significant). 
In this case, based on our theoretical results, the hacker can still search for an auxiliary variable $V$ that preserves the sign pattern while boosting their significance.
When the baseline results contradict the intended narrative, the hacker resorts to SHAVE to achieve suitable signs. The findings above provide numerical support for the mechanism underlying SHAVE. In particular, the non-negligible frequency of SHAVE events confirms that a hacker can, in practice, identify auxiliary variables that overturn unfavorable results. The observed patterns further indicate that such manipulation becomes easier in environments with strong dependence and limited information. Moreover, the prevalence of statistically significant outcomes conditional on SHAVE implies that the resulting models can appear highly credible, reinforcing the practical relevance of the proposed mechanism.

\subsection{Sufficiency of the constructed set $\mathcal{V}$}
The preceding simulation documents the practical feasibility of SHAVE across a range of configurations. We now turn to a complementary exercise that examines the sufficiency of the constructed set $\mathcal{V}$ in generating sign changes. To this end, we focus on a least favorable regime suggested by Table \ref{tab:IndV}, namely, settings with larger sample size and signal strength and weaker dependence. 

Specifically, we consider $n \in \{30, 100\}$ and $(\beta_1, \beta_2) \in \{(0.2, 0.2), (1.0, 1.0)\}$, with error variance fixed at $\sigma^2 = 0.5$ in the data-generating process \eqref{model:simulation}. The regressors $X_1$ and $X_2$ are generated independently from $N(0,I_n)$. The auxiliary variables in $X_A$ are constructed according to 
\begin{align} \label{model:generateV}
	V =  X_1 A_1 + X_2 A_2 + b Y + \varepsilon \tilde{U}, 
\end{align}
where $\tilde{U} = U/\Vert U \Vert $ with $U \sim N(0, I_n)$, $(A_1, A_2) \in \{(0.1, -0.1), (1.0, -1.0)\}$ and $b \in \{0.1, 1.0\}$. The perturbation level follows $\varepsilon \sim Unif([0,\xi])$, with $\xi \in \{0.1, 1.0\}$. For each of 100 Monte Carlo replications, $10^4$ auxiliary variables are generated. 

Building on the previous simulation, we further examine whether the sufficient conditions underlying the construction of $\mathcal{V}$ are satisfied, while maintaining the same evaluation metrics. Specifically, we consider (C1) $ 0 < \varepsilon \leq \eta \Vert M_{X1,X2}Y \Vert \kappa_{\min}(A) $ and (C2) $ b \geq \kappa_{\max}(A)$, where $\kappa_{\min}(A)$ and $\kappa_{\max}(A)$ are defined in Eq. \eqref{eq:definekappa}. We set $\eta = 0.24$.

Table \ref{tab:Const-V} summarizes the results. When both conditions (C1) and (C2) are satisfied, SHAVE occurs with probability one, and the resulting estimates are statistically significant, providing direct empirical support for the sufficiency of $\mathcal{V}$. Even when these conditions are not jointly satisfied, SHAVE continues to occur with high probability when the perturbation level is small ($\xi = 0.1$). For larger perturbations ($\xi = 1$), conditions (C1) and (C2) are rarely satisfied simultaneously. Nevertheless, SHAVE still occurs at a high rate, often accompanied by statistically significant estimates. 

\begin{table}[!h]
	\centering
	\caption{Frequency of SHAVE under constructed auxiliary variables.}
	\begin{adjustbox}{width=\textwidth,center}
		\begin{threeparttable}
			\begin{tabular}{rrrrrrrrrr}
				\toprule
				$n$ & $(\beta_1, \beta_2)$ & $(A_1, A_2)$ & $b$ & Valid & C1  & C2 & SHAVE & Sig.($\hat{\alpha}_1$) & Sig.($\hat{\alpha}_2$) \\
				\midrule
				\multicolumn{10}{l}{Panel A. $\xi = 0.1$} \\
				\midrule
				\multirow[t]{8}[1]{*}{30} & \multirow[t]{4}[1]{*}{(0.2,0.2)} & \multirow[t]{2}[1]{*}{(0.1,-0.1)} & 0.1   & 83.0\% & 81.9\% & 1.2\% & 100.0\% & 100.0\% & 100.0\% \\
				&       &       & 1     & 88.0\% & 80.7\% & 100.0\% & 100.0\% & 100.0\% & 100.0\% \\
				&       & \multirow[t]{2}[0]{*}{(1,-1)} & 0.1   & 88.0\% & 100.0\% & 0.0\% & 100.0\% & 100.0\% & 100.0\% \\
				&       &       & 1     & 87.0\% & 100.0\% & 1.1\% & 100.0\% & 100.0\% & 100.0\% \\
				& \multirow[t]{4}[0]{*}{(1,1)} & \multirow[t]{2}[0]{*}{(0.1,-0.1)} & 0.1   & 100.0\% & 0.0\% & 95.0\% & 100.0\% & 100.0\% & 100.0\% \\
				&       &       & 1     & 100.0\% & 0.0\% & 100.0\% & 100.0\% & 100.0\% & 100.0\% \\
				&       & \multirow[t]{2}[0]{*}{(1,-1)} & 0.1   & 100.0\% & 100.0\% & 0.0\% & 100.0\% & 100.0\% & 100.0\% \\
				&       &       & 1     & 100.0\% & 100.0\% & 94.0\% & 100.0\% & 100.0\% & 100.0\% \\
				\multirow[t]{8}[1]{*}{100} & \multirow[t]{4}[0]{*}{(0.2,0.2)} & \multirow[t]{2}[0]{*}{(0.1,-0.1)} & 0.1   & 100.0\% & 100.0\% & 0.0\% & 100.0\% & 100.0\% & 100.0\% \\
				&       &       & 1     & 100.0\% & 100.0\% & 100.0\% & 100.0\% & 100.0\% & 100.0\% \\
				&       & \multirow[t]{2}[0]{*}{(1,-1)} & 0.1   & 100.0\% & 100.0\% & 0.0\% & 100.0\% & 100.0\% & 100.0\% \\
				&       &       & 1     & 98.0\% & 100.0\% & 0.0\% & 100.0\% & 100.0\% & 100.0\% \\
				& \multirow[t]{4}[1]{*}{(1,1)} & \multirow[t]{2}[0]{*}{(0.1,-0.1)} & 0.1   & 100.0\% & 100.0\% & 99.0\% & 100.0\% & 100.0\% & 100.0\% \\
				&       &       & 1     & 100.0\% & 100.0\% & 100.0\% & 100.0\% & 100.0\% & 100.0\% \\
				&       & \multirow[t]{2}[1]{*}{(1,-1)} & 0.1   & 100.0\% & 100.0\% & 0.0\% & 100.0\% & 100.0\% & 100.0\% \\
				&       &       & 1     & 100.0\% & 100.0\% & 100.0\% & 100.0\% & 100.0\% & 100.0\% \\
				\midrule
				\multicolumn{10}{l}{Panel B. $\xi = 1.0$} \\
				\midrule
				\multirow[t]{8}[1]{*}{30} & \multirow[t]{4}[1]{*}{(0.2,0.2)} & \multirow[t]{2}[1]{*}{(0.1,-0.1)} & 0.1   & 86.0\% & 0.0\% & 0.0\% & 77.6\% & 62.2\% & 97.8\% \\
				&       &       & 1     & 88.0\% & 0.0\% & 100.0\% & 99.7\% & 93.2\% & 98.7\% \\
				&       & \multirow[t]{2}[0]{*}{(1,-1)} & 0.1   & 92.0\% & 87.0\% & 0.0\% & 99.4\% & 88.2\% & 93.9\% \\
				&       &       & 1     & 90.0\% & 80.0\% & 0.0\% & 100.0\% & 100.0\% & 100.0\% \\
				& \multirow[t]{4}[0]{*}{(1,1)} & \multirow[t]{2}[0]{*}{(0.1,-0.1)} & 0.1   & 100.0\% & 0.0\% & 95.0\% & 39.4\% & 70.7\% & 100.0\% \\
				&       &       & 1     & 100.0\% & 0.0\% & 100.0\% & 94.1\% & 71.5\% & 95.8\% \\
				&       & \multirow[t]{2}[0]{*}{(1,-1)} & 0.1   & 100.0\% & 0.0\% & 0.0\% & 96.2\% & 75.6\% & 100.0\% \\
				&       &       & 1     & 100.0\% & 0.0\% & 93.0\% & 100.0\% & 100.0\% & 100.0\% \\
				\multirow[t]{8}[1]{*}{100} & \multirow[t]{4}[0]{*}{(0.2,0.2)} & \multirow[t]{2}[0]{*}{(0.1,-0.1)} & 0.1   & 100.0\% & 0.0\% & 0.0\% & 100.0\% & 98.8\% & 100.0\% \\
				&       &       & 1     & 100.0\% & 0.0\% & 100.0\% & 100.0\% & 100.0\% & 100.0\% \\
				&       & \multirow[t]{2}[0]{*}{(1,-1)} & 0.1   & 100.0\% & 100.0\% & 0.0\% & 100.0\% & 100.0\% & 100.0\% \\
				&       &       & 1     & 99.0\% & 100.0\% & 0.0\% & 100.0\% & 100.0\% & 100.0\% \\
				& \multirow[t]{4}[1]{*}{(1,1)} & \multirow[t]{2}[0]{*}{(0.1,-0.1)} & 0.1   & 100.0\% & 0.0\% & 100.0\% & 71.3\% & 83.1\% & 100.0\% \\
				&       &       & 1     & 100.0\% & 0.0\% & 100.0\% & 100.0\% & 100.0\% & 100.0\% \\
				&       & \multirow[t]{2}[1]{*}{(1,-1)} & 0.1   & 100.0\% & 100.0\% & 0.0\% & 100.0\% & 100.0\% & 100.0\% \\
				&       &       & 1     & 100.0\% & 100.0\% & 100.0\% & 100.0\% & 100.0\% & 100.0\% \\
				\bottomrule
			\end{tabular}%
			\vspace{0.2ex}
			{\raggedright Note: ``C1'' and ``C2'' denote the frequencies with which conditions C1 and C2 hold. The remaining columns are defined as in Table \ref{tab:IndV}. Entries are percentages based on $10^4$ generated realizations of constructed auxiliary variables across 100 Monte Carlo replications. \par}
		\end{threeparttable}
	\end{adjustbox}
	\label{tab:Const-V}%
\end{table}%

These findings highlight that the constructed set $\mathcal{V}$ captures a tractable sufficient subset of sign-reversing perturbations, but does not exhaust all such directions. The observed patterns are driven by the construction parameters and the data-generating process: stronger dependence between $V$ and $(X_1, X_2, Y)$ increases the likelihood of sign reversal, whereas larger perturbations or stronger signals tend to move $V$ away from regions that induce sign changes. 

The first and second simulations together verify the existence of a set of auxiliary variables with positive Lebesgue measure that induce SHAVE, although identifying such variables becomes increasingly difficult under conditional independence. The constructed set $\mathcal{V}$ provides a sufficient but not necessary characterization: realizations outside $\mathcal{V}$ may still induce SHAVE, and the construction does not fully capture settings with highly correlated covariates. 

Importantly, the variables in $X_A$ have no causal effect on $Y$ under either the data-generating process of $Y$ or the construction of $V$. All observed sign reversals arise entirely from sample-induced dependence between $(X_1, X_2, Y)$ and $V$, rather than from genuine causal relationships. From the perspective of the hacker, this suggests that the availability of variables sufficiently correlated with the existing covariates facilitates the implementation of SHAVE. 

A further complication is that baseline estimates may themselves exhibit sign uncertainty due to sampling variability, as reflected in column ``Valid'', a phenomenon that is often exacerbated by strong correlation among covariates. In such cases, although the inclusion of an auxiliary variable can be viewed as deliberate manipulation, the resulting sign reversal may coincidentally align the estimate with the true direction. In other words, SHAVE, while fundamentally driven by spurious dependence, may sometimes appear to ``correct'' an initially incorrect sign. This dual possibility, in which SHAVE can either distort or seemingly improve inference, makes it inherently difficult to distinguish from genuine signal, and may even render the two observationally indistinguishable. Motivated by this challenge, we develop diagnostic tools to detect potential SHAVE in Section \ref{sec:detectSHAVE}.

\subsection{An empirical example using the rat eye gene dataset} \label{subsec:rateye}
We next examine how SHAVE may manifest in empirical analysis using the rat eye gene expression dataset \citep{scheetz2006regulation}. The data consist of microarray measurements from eye tissue of 120 twelve-week-old male rats, obtained using the Affymetrix Rat Genome 230 2.0 Array with 31,099 probe sets. Following preprocessing for expression level and variability \citep{huang2008adaptive}, 19,032 probes are retained.

Previous studies have identified at most 24 probes that may be significantly associated with the TRIM32 gene \citep{huang2008adaptive}. However, even when restricting attention to the top 200 probes ranked by marginal correlation, commonly used selection methods such as Lasso, SCAD, and MCP exhibit varying degrees of instability \citep{nan2014variable}. From the perspective of a sign hacker, such instability creates scope for implementing SHAVE, as it reflects the sensitivity of estimation outcomes to model specification, thereby allowing sign changes to arise under suitable augmentation by auxiliary variables.

Suppose the hacker also sets probe 1389163\_at as the response variable $Y$ and constructs a baseline regression of $Y$ on $X_{\mathcal{I}_1}$ and $X_{\mathcal{C}}$, where both are drawn from the 24 TRIM32-associated probes. Here, $X_{\mathcal{I}_1}$ contains the covariate(s) of primary interest, while $X_{\mathcal{C}}$ serves as the control set. Table \ref{tab:emp-ill-tab} reports the regression results for the two scenarios described below, comparing outcomes before and after SHAVE.

\begin{table}[!h]
	\footnotesize
	\centering
	\caption{Sign change phenomenon for $X_{\mathcal{I}_1}$}
	\scalebox{0.95}{
		\begin{threeparttable}
			\begin{tabular}{cccccc}
				\toprule
				\textbf{Cat.} & \textbf{Var.} & \textbf{M1} & \textbf{M2} & \textbf{M3} & \textbf{M4} \\
				\midrule
				\multirow[t]{4}[1]{*}{$X_{\mathcal{I}_1}$} & 1369353\_at & -0.018 & $0.214^{*}$ & -0.142 & $-0.163^{*}$ \\
				&       & (0.119) & (0.128) & (0.102) & (0.095) \\
				& 1370429\_at & --    & --    & -0.010 & $0.193^{*}$ \\
				&       &       &       & (0.109) & (0.113) \\
				\multirow[t]{4}[0]{*}{$V$} & 1372894\_at & --    & $0.278^{***}$ & --    & -- \\
				&       &       & (0.073) &       &  \\
				& 1375833\_at & --    & --    & --    & $-0.310^{***}$ \\
				&       &       &       &       & (0.075) \\
				\multirow[t]{38}[1]{*}{$X_{\mathcal{C}}$} & 1371242\_at & -0.081 & -0.053 & --    & -- \\
				&       & (0.115) & (0.108) &       &  \\
				& 1374106\_at & 0.092 & $0.172^{*}$ & $0.229^{***}$ & $0.235^{***}$ \\
				&       & (0.093) & (0.090) & (0.078) & (0.073) \\
				& 1374131\_at & 0.096 & 0.077 & 0.108 & $0.139^{*}$ \\
				&       & (0.075) & (0.070) & (0.076) & (0.071) \\
				& 1378935\_at & -0.082 & -0.022 & -0.101 & -0.081 \\
				&       & (0.094) & (0.090) & (0.089) & (0.083) \\
				& 1380033\_at & $0.178^{**}$ & $0.178^{***}$ & --    & -- \\
				&       & (0.069) & (0.065) &       &  \\
				& 1382835\_at & --    & --    & $0.213^{***}$ & $0.147^{**}$ \\
				&       &       &       & (0.065) & (0.063) \\
				& 1383110\_at & 0.189 & $0.255^{**}$ & --    & -- \\
				&       & (0.122) & (0.116) &       &  \\
				& 1383522\_at & 0.081 & 0.122 & 0.103 & 0.065 \\
				&       & (0.082) & (0.078) & (0.076) & (0.072) \\
				& 1383673\_at & -0.013 & -0.042 & --    & -- \\
				&       & (0.117) & (0.110) &       &  \\
				& 1383749\_at & -0.102 & $-0.150^{**}$ & -0.111 & $-0.120^{*}$ \\
				&       & (0.074) & (0.071) & (0.068) & (0.063) \\
				& 1383996\_at & --    & --    & $0.213^{***}$ & $0.207^{***}$ \\
				&       &       &       & (0.060) & (0.056) \\
				& 1389584\_at & $0.329^{***}$ & $0.306^{***}$ & --    & -- \\
				&       & (0.107) & (0.101) &       &  \\
				& 1390788\_a\_at & 0.130 & $0.199^{**}$ & --    & -- \\
				&       & (0.089) & (0.086) &       &  \\
				& 1393382\_at & 0.043 & 0.044 & --    & -- \\
				&       & (0.082) & (0.077) &       &  \\
				& 1393684\_at & 0.042 & 0.028 & 0.115 & 0.107 \\
				&       & (0.082) & (0.077) & (0.073) & (0.068) \\
				& 1393979\_at & -0.044 & -0.107 & --    & -- \\
				&       & (0.124) & (0.118) &       &  \\
				& 1394107\_at & --    & --    & $-0.299^{***}$ & $-0.314^{***}$ \\
				&       &       &       & (0.092) & (0.086) \\
				& 1395415\_at & -0.008 & 0.011 & $0.144^{*}$ & $0.151^{*}$ \\
				&       & (0.097) & (0.091) & (0.087) & (0.081) \\
				& 1398255\_at & $-0.366^{***}$ & $-0.337^{***}$ & --    & -- \\
				&       & (0.095) & (0.090) &       &  \\
				\midrule
				BIC & & 260.26 & 249.2 & 234.71 & 221.5 \\
				\bottomrule
			\end{tabular}
			\begin{tablenotes}
				\item Notes: Standard errors are reported in parentheses. $^{***} p < 0.01$, $^{**} p < 0.05$, $^{*} p < 0.1$. $|\mathcal{C}|$ denotes the number of controls.   
			\end{tablenotes}
		\end{threeparttable}
	}
	\label{tab:emp-ill-tab}
\end{table}

In the first scenario, $X_{\mathcal{I}_1}$ consists of a single probe (1369353\_at) and $X_{\mathcal{C}}$ contains 16 controls. The baseline regression yields a negative coefficient for 1369353\_at, which supposedly does not align with the hacker's desired narrative. The hacker then searches over the remaining 19,007 probes and identifies an auxiliary variable $V$ (1372894\_at) such that, upon inclusion, the coefficient of 1369353\_at becomes positive and statistically significant at the 10\% level. 

In the second scenario, $X_{\mathcal{I}_1}$ contains two probes (1369353\_at and 1370429\_at) and $X_{\mathcal{C}}$ contains 10 controls. The baseline regression produces negative coefficients for both variables, whereas the hacker's desired pattern is a significantly negative effect for 1369353\_at and a significantly positive effect for 1370429\_at. Again, by searching over the remaining probes, the hacker successfully identifies an auxiliary variable $V$ (1375833\_at) that induces this selective sign reversal. 

In both scenarios, the selected auxiliary variable $V$ is itself statistically significant, while the estimated coefficients of the control variables remain largely unchanged. From the perspective of standard model diagnostics, the manipulated specifications appear preferable: models including $V$ consistently exhibit lower BIC values than those excluding it, and $F$-tests comparing the two specifications reject the null hypothesis that the coefficient on $V$ is zero. These diagnostic outcomes provide conventional statistical justification for including $V$, reinforcing the apparent credibility of the resulting model. These results demonstrate that SHAVE is practically implementable and yields empirical consequences consistent with the theoretical analysis in Section \ref{sec:SignHackProblem}.

The two scenarios presented above are representative rather than exceptional. Additional analyses reveal that, for the same $Y$ and $X_{\mathcal{I}_1}$ as in the above setup, each control set size gives rise to many distinct specifications, within which multiple auxiliary variables $V$ can be identified to induce the desired sign pattern. Their number declines mildly as control set size increases. Moreover, even when the baseline model includes all 24 TRIM32-associated probes, searching over the remaining 19,007 probes continues to yield a large number of candidate variables capable of inducing sign reversals in one or more target coefficients, often in the thousands. Importantly, the resulting estimates remain statistically significant, and model comparisons continue to favor specifications that include the selected auxiliary variable, even as the control set varies. These additional findings provide empirical evidence that SHAVE is not only feasible but also easy to implement and can readily arise in practice. Further details are provided in Appendix \ref{app:additional-empirical}. Since the true data-generating process is never observable, the baseline model without $V$ cannot be regarded as inherently more reliable. As a result, empirical evidence alone may not distinguish between baseline and augmented models, even when they imply markedly different sign patterns.

This practical feasibility can be further reinforced by the availability of standard statistical software. For instance, the \textit{tuples} command in Stata can enumerate variable combinations \footnote{The \textit{tuples} command functions efficiently for fewer than 20 variables, but may exceed memory capacity for larger sets.}, while the \textit{FWDselect} package in R automates model search based on criteria such as $R^2$ or residual variance. Although these tools are designed for legitimate exploratory analysis, they can also be misused to search over the response $Y$, the variable of interest $X$, and a large auxiliary pool $X_A$ for specifications that improve in-sample fit and deliver a preferred sign pattern. In this sense, the practical feasibility of SHAVE reflects not only the abundance of candidate auxiliary variables in modern data settings, but also the ease with which large-scale specification search can be operationalized.

\section{Detecting SHAVE} \label{sec:detectSHAVE}
The model produced by SHAVE often appears statistically valid when evaluated on the reported data, and the results are even reproducible within that dataset. This makes distinguishing genuine findings from manipulated specifications particularly challenging. We propose two detection strategies that exploit independent information: (i) lineup detection using augmented data and (ii) replication detection using an independent dataset or a new experimental design.

\subsection{Lineup detection using augmented data} \label{subsec:lineupdetection}

Suppose we suspect that including a variable $V$ induces SHAVE for a target variable $X_{\mathcal{I}_1}$ in a regression model. We examine how often $V$ is selected compared with other variables of similar characteristics, based on prior studies or expert knowledge. If $V$ genuinely contributes to the outcome $Y$, it should be selected more frequently than comparable variables without direct influence. A low selection frequency instead suggests weak empirical support, raising concerns about potential SHAVE. 

If $V$ is rarely selected because it is highly correlated with newly included variables, the conclusion remains inconclusive, as these correlated substitutes may serve as proxies for $V$. In this case, we reestimate the model by replacing $V$ with its correlated counterparts and examine whether the estimated signs for $X_{\mathcal{I}_1}$ remain stable. Stability of the signs provides indirect support for the role of $V$, whereas sign reversals suggest possible manipulation. This procedure, analogous to a police lineup, is termed lineup detection.

We now illustrate the performance of the lineup detection procedure applied to the results after SHAVE in Section \ref{subsec:rateye}. For each identified candidate $V$, we augment the set of remaining genes and examine its selection frequency under Lasso, SCAD, and MCP with cross-validated tuning parameters, both with and without conditioning on the inclusion of $X_{\mathcal{I}_1}$. The procedure is repeated 50 times to account for selection instability. Under the same setup of $Y$ and $X_{\mathcal{I}_1}$, none of the identified auxiliary variables $V$ are selected by these methods in either the conditional or unconditional specifications. Extending the analysis across all probes in the 24 TRIM32-associated genes yields similar patterns: after accounting for selection instability, only a small fraction (typically below 1\%) of candidate variables are selected by SCAD and MCP, with slightly higher frequencies observed for Lasso and under conditional selection . Additional numerical simulations reinforce this pattern: variables that are spuriously associated with the outcome tend to have low selection frequencies even when correlated with true predictors, whereas genuinely relevant variables are selected consistently more often. See Appendix \ref{app:lineup} for further details. Taken together, these findings suggest that lineup detection can provide useful diagnostic information for distinguishing manipulated specifications, although it does not offer a definitive identification criterion.

\subsection{Replication detection with an independent dataset}
Replicating results using independent data provides a more direct diagnostic for SHAVE. Let the suspected model yield estimate $\hat{\beta}_{X,j}$ with standard error $se(\hat{\beta}_{X,j})$ and sample size $n_0$, together with the $ (1-\alpha_0) $-level confidence interval $[\hat{L}_j, \hat{U}_j]$. If $[\hat{L}_j, \hat{U}_j]$ excludes zero, say $\hat{L}_j > 0$, the original claim is that $\beta_j > 0$. 

\subsubsection{Detection with another independent dataset}
Suppose an independent dataset of size $n_1$ is available. Let $\tilde{\beta}_{X,j}$ be the new estimate, with adjusted standard error $se(\tilde{\beta}_{X,j}) = \sqrt{\frac{n_0}{n_1}}se(\hat{\beta}_{X,j})$ to ensure comparable scaling.

We test $H_0: \beta_{X,j} \geq \tau \hat{\beta}_{X,j}$ versus $H_1: \beta_{X,j} < \tau \hat{\beta}_{X,j}$, where $\tau \in (0,1) $ is a believability index reflecting the replicator's confidence in the original claim. Failure to reject $H_0$ provides support for the original finding, whereas rejection indicates inconsistency that may arise from potential SHAVE. 

If no prior belief about $\tau$ exists, a data-driven benchmark $\tau^{*} = \frac{\tilde{\beta}_{X,j} + c_{\alpha_0} se(\tilde{\beta}_{X,j})}{\hat{\beta}_{X,j}} $ can be computed, where $c_{\alpha_0}$ is the one-sided critical value of the $t$-distribution at level $\alpha_0$. Smaller values of $\tau^{*}$ indicate greater inconsistency between the two studies. When $\tau^{*} \leq \hat{L}_j/\hat{\beta}_{X,j}$, the confidence intervals for $\tilde{\beta}_{X,j}$ and $\hat{\beta}_{X,j}$ do not overlap, providing strong evidence against the original specification.  

\subsubsection{Detection via new study design}
When independent data are unavailable, replication can be planned through a new study with sufficient statistical power. 

To assess the suspected result $\hat{\beta}_{X,j} > 0$, we test $H_0: \beta_{X,j} \geq \tau_1 \hat{\beta}_{X,j}$ against $H_1: \beta_{X,j} < \tau_1\hat{\beta}_{X,j} $ with predetermined believability index $\tau_1 \in (0,1)$, significance level $\alpha_1$, and type II error rate $\zeta_1$. The required sample size is $n_2' = n_0 se(\hat{\beta}_{X,j})^2 \left(\frac{c_{\zeta_1}+c_{\alpha_1}}{\tau_1\hat{\beta}_{X,j}} \right)^2$. 
Conversely, to confirm the claim, we test $H_0: \beta_{X,j} \leq 0$ versus $H_1: \beta_{X,j} > 0$. Ensuring power of at least $ 1 - \zeta_2$ under the alternative $\beta_{X,j} = \tau_2\hat{\beta}_{X,j}$ requires $n_2'' = n_0 se(\hat{\beta}_{X,j})^2 \left(\frac{c_{\zeta_2}+c_{\alpha_2}}{\tau_2\hat{\beta}_{X,j}} \right)^2$, where $\alpha_2$ and $\tau_2$ denote the type I error and believability index for confirmation. A balanced design with $n_2 \geq \max(n_2', n_2'')$ ensures sufficient power to either refute or confirm the suspected finding, while maintaining type II error below $\min(\zeta_1, \zeta_2)$. The same logic applies when the suspected coefficient is negative.

Although SHAVE can be statistically detected, its prevention remains challenging due to limited information and the high cost of collecting new data. Detection procedures therefore play a critical role in assessing the credibility of empirical findings, particularly when replication is infeasible. More broadly, these approaches highlight the importance of incorporating external or independent information when evaluating potentially manipulated specifications.

\section{Conclusion} \label{sec:Conclusion}

In this paper, we investigate the possibility of Sign Hacking with Auxiliary Variable Exploration (SHAVE) in the context of big data. SHAVE refers to the introduction of auxiliary variables to achieve a desired sign pattern of estimated coefficients. By representing variables as vectors in Euclidean space, we prove the existence of a set of auxiliary variables with positive Lebesgue measure whose inclusion can reverse the signs of existing coefficients. We further show that, with high probability, such variables can be identified from a large pool of candidates, and that both $t$-statistics and standard measures of model fit can be arbitrarily manipulated. Together, these results provide a formal mechanism through which specification search can generate statistically significant yet potentially spurious findings. Numerical examples further indicate that strong dependence among covariates increases susceptibility to such manipulation. 

These findings have important implications for empirical practice. In settings with substantial model uncertainty, particularly in high-dimensional applications, selective inclusion of auxiliary variables can produce results that appear statistically valid and internally consistent, while deviating from the underlying data generating mechanism. Because the manipulated specifications may remain reproducible within the observed dataset, conventional diagnostic tools offer limited protection against such behavior. This highlights the role of model uncertainty as a fundamental source of vulnerability in empirical analysis. 

We propose detection strategies based on augmented or independent data to assess the credibility of such findings. While these procedures can help distinguish between spurious and genuinely relevant variables, their effectiveness is inherently limited when external information is unavailable. Moreover, our analysis focuses on the inclusion of a single auxiliary variable, whereas in practice multiple variables may jointly induce similar effects. Extending these strategies to more complex settings, including multi-variable and nonlinear specifications, remains an important direction for future research. 

More broadly, SHAVE underscores a fundamental limitation of inference based solely on correlation. Modern empirical practice often relies on linear models in which coefficients are interpreted as causal effects. Yet, as \citet[p. 190]{fisher1950statistical} cautioned, ``If we choose ... a group of social phenomena with no antecedent knowledge of the causation or absence of causation among them, then the calculation of correlation coefficients, total or partial, will not advance us a step towards evaluating the importance of the causes at work." This observation remains highly relevant: statistical significance in the presence of flexible specification search may reflect artefacts of model construction rather than genuine causal relationships.

\bibliographystyle{plainnat} 
\bibliography{signhackbib} 

\appendix
\section*{Appendix}
The appendix is organized as follows. Section \ref{app:geometry} provides supplementary geometric characterizations of sign reversal in both planar and non-planar settings. Section \ref{app:connections} discusses connections between SHAVE and related empirical phenomena, including spurious correlation, p-hacking, sponsorship bias, and research misconduct. Sections \ref{app:additional-empirical}--\ref{app:bodyfat} report additional empirical and simulation results, including auxiliary-variable identification, lineup detection, and details of the body fat dataset.

\section{Supplementary geometric characterization} \label{app:geometry}

For any observational study, the best linear projection (BLP) offers an interpretable measure of association, regardless of the true underlying functional form. We begin by illustrating why a sign reversal may occur by examining the BLP within a vector space framework. 

Let $ \vec{y}, \vec{x} \in \mathbb{R}^n $. The orthogonal decomposition of $\vec{y}$ onto $\vec{x}$ is
\begin{align*}
	\vec{y} = \vec{x} \hat{\beta}_x + \vec{e}_x,
\end{align*} 
where $\hat{\beta}_x = \arg\min_{\beta} \Vert \vec{y} - \vec{x} \beta \Vert^2$ is the BLP coefficient and $\vec{e}_x$ is the residual vector. Suppose an additional vector $\vec{v} \in \mathbb{R}^n$ is observed, leading to the expanded decomposition
\begin{align*}
	\vec{y} = \vec{x} \hat{\alpha}_x + \vec{v} \hat{\alpha}_v + \vec{e}_{xv},
\end{align*}  
where $(\hat{\alpha}_x, \hat{\alpha}_v)' = \arg \min_{\alpha_x, \alpha_v} \Vert \vec{y} - \vec{x} \alpha_x - \vec{v} \alpha_v \Vert^2 $ and $\vec{e}_{xv}$ is the residual vector. 

The following lemma characterizes the exact geometric condition under which $\hat{\beta}_x$ and $\hat{\alpha}_x$ have opposite signs when vectors $\vec{y}, \vec{x}$ and $\vec{v}$ lie in the same plane. 

\begin{lemma}\label{lma:SignChangecondition}
	Let $ \vec{y}, \vec{x} \in \mathbb{R}^n $ and suppose $ \vec{v} $ lies in the plane $ \mathrm{span}\{\vec{y}, \vec{x}\} $. Then $\sgn \hat{\beta}_x  = - \sgn \hat{\alpha}_x$ if and only if 
	\begin{equation} \label{eq:AngleRelation}
		\cos \theta (\cos \theta - \cos \delta \cos \phi) < 0, 
	\end{equation} 
	where $ \theta $ is the angle between $ \vec{y} $ and $ \vec{x} $,  $\delta$ is the angle between $ \vec{x} $ and $ \vec{v} $, and $ \phi $ is the angle between $ \vec{y} $ and $ \vec{v} $.
\end{lemma}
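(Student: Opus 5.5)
Since $\vec{v}$ lies in $\mathrm{span}\{\vec{y}, \vec{x}\}$ and we may assume $\vec{x}, \vec{v}$ linearly independent (otherwise $\hat{\alpha}_x$ is not well defined), $\vec{y}$ lies in $\mathrm{span}\{\vec{x}, \vec{v}\}$. Hence $\vec{e}_{xv} = 0$ and $\vec{y} = \vec{x}\hat{\alpha}_x + \vec{v}\hat{\alpha}_v$ exactly. The plan is to write both $\hat{\beta}_x$ and $\hat{\alpha}_x$ in closed form through inner products. Then we rewrite each inner product as a product of norms and cosines, and compare signs. Norms are positive, so only the angle factors matter.

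\textbf{Step 1 (simple regression).} We have $\hat{\beta}_x = \langle \vec{x}, \vec{y}\rangle / \Vert \vec{x}\Vert^2 = \Vert \vec{y}\Vert \cos\theta / \Vert \vec{x}\Vert$. Hence $\sgn \hat{\beta}_x = \sgn \cos\theta$.

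\textbf{Step 2 (two-regressor coefficient).} By the Frisch--Waugh--Lovell formula (equivalently, inverting the $2\times 2$ Gram matrix),
\begin{align*}
\hat{\alpha}_x = \frac{\Vert \vec{v}\Vert^2 \langle \vec{x}, \vec{y}\rangle - \langle \vec{x}, \vec{v}\rangle \langle \vec{v}, \vec{y}\rangle}{\Vert \vec{x}\Vert^2 \Vert \vec{v}\Vert^2 - \langle \vec{x}, \vec{v}\rangle^2}.
\end{align*}
Substituting $\langle \vec{x}, \vec{y}\rangle = \Vert\vec{x}\Vert\Vert\vec{y}\Vert\cos\theta$, $\langle \vec{x}, \vec{v}\rangle = \Vert\vec{x}\Vert\Vert\vec{v}\Vert\cos\delta$ and $\langle \vec{v}, \vec{y}\rangle = \Vert\vec{v}\Vert\Vert\vec{y}\Vert\cos\phi$ gives
\begin{align*}
\hat{\alpha}_x = \frac{\Vert \vec{y}\Vert\,(\cos\theta - \cos\delta\cos\phi)}{\Vert\vec{x}\Vert\,\sin^2\delta}.
\end{align*}
The denominator is positive by linear independence, since $\sin^2\delta > 0$. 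Therefore $\sgn \hat{\alpha}_x = \sgn(\cos\theta - \cos\delta\cos\phi)$.

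\textbf{Step 3 (combine).} The condition $\sgn\hat{\beta}_x = -\sgn\hat{\alpha}_x$ is read with both signs nonzero, so the sign is actually reversed rather than one coefficient vanishing. Under that reading it is equivalent to the product $\cos\theta(\cos\theta - \cos\delta\cos\phi)$ being strictly negative, which is \eqref{eq:AngleRelation}. The strict inequality also automatically excludes the degenerate cases where either factor is zero.

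\textbf{Main obstacle.} The only delicate point is bookkeeping of the degenerate cases. These are $\vec{v}$ parallel to $\vec{x}$ (coefficients undefined), and the zero-sign case, where one side is $0$ and the other is not. I will state that the lemma is understood for non-collinear $\vec{x}, \vec{v}$ and with the convention that a reversal requires nonzero signs; the strict inequality matches this convention. Note also that coplanarity is not really needed for the sign formula, since Step 2 holds in general. It only guarantees the interpretation of Figure \ref{fig:parallelogram-cases}.
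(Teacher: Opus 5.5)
Your proof is correct and essentially the paper's: both show $\hat\beta_x$ is a positive multiple of $\cos\theta$ and, via Frisch--Waugh--Lovell (your explicit $2\times 2$ Gram inverse is the same computation), that $\hat\alpha_x$ is a positive multiple of $\cos\theta-\cos\delta\cos\phi$. Your explicit positive denominator $\Vert\vec x\Vert\sin^2\delta$ and your convention that a reversal requires nonzero signs are small refinements the paper leaves implicit. The convention is genuinely needed: when $\vec y\perp\vec x$ and $\vec v\parallel\vec y$, both coefficients are zero, so $\sgn\hat\beta_x=-\sgn\hat\alpha_x$ holds trivially while the product in \eqref{eq:AngleRelation} vanishes.
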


The condition in Eq. \eqref{eq:AngleRelation} admits a clear geometric interpretation and can be readily verified across the different configurations shown in Figure \ref{fig:parallelogram-cases}.

To characterize all possible orientations of $\vec{v}$ when it does not lie in the plane $ \mathrm{span}\{\vec{y}, \vec{x}\} $, consider an arbitrary plane $ \mathcal{P} $ containing $ \vec{x} $ but distinct from $\mathrm{span}\{\vec{y}, \vec{x}\}$. For any $ \vec{v} \in \mathcal{P} $ that is not collinear with $\vec{x}$, rotating $ \mathcal{P} $ around $ \vec{x} $ traces out all admissible orientations of $ \vec{v} $. The following proposition extends Lemma \ref{lma:SignChangecondition} to this general case, providing a necessary and sufficient condition for sign reversal when $\vec{v}$ lies outside the plane $\mathrm{span}\{\vec{y}, \vec{x}\}$.

\begin{prop} \label{prop:GeneralSignChangeCondition}
	Let $ \vec{y} $ and $ \vec{x} $ span the plane $ \mathrm{span}\{\vec{y}, \vec{x}\} $. Consider any plane $\mathcal{P}$ containing $ \vec{x} $ but distinct from $\mathrm{span}\{\vec{y}, \vec{x}\} $, and let $ \vec{v} \in \mathcal{P} $ be noncollinear with $ \vec{x} $. Define 
	\begin{itemize}
		\item $ \theta $: the angle between $ \vec{y} $ and $ \vec{x} $, 
		\item $\gamma \in (0, \pi/2)$: the acute angle between $\vec{y}$ and $\mathcal{P}$, 
		\item $\alpha \in (0, \pi)$: the dihedral angle between $\mathcal{P}$ and $\mathrm{span}\{\vec{y}, \vec{x}\}$, and 
		\item $\eta$: the angle between $\vec{x}$ and the projection of $\vec{y}$ onto $\mathcal{P}$.
	\end{itemize} 
	Then
	\begin{align}
		\sin \gamma &= \sin \theta \sin \alpha, \label{eq:IdentGamma}\\
		\cos \gamma &= \cos \theta \cos \eta + \sin \theta \sin \eta \cos \alpha, \label{eq:IdentEta}
	\end{align} 
	and $ \eta $ is uniquely identified.
	Furthermore, let $\delta$ be the angle between $\vec{x}$ and $\vec{v}$, and let $ \phi $ be the angle between $ \vec{v} $ and the projection of $ \vec{y} $ onto $ \mathcal{P} $. A sign reversal $\sgn \hat{\beta}_x  = -\sgn \hat{\alpha}_x$ occurs if and only if 
	\begin{align} \label{eq:generalsignchangecondition}
		\cos \eta (\cos \eta  - \cos \delta \cos \phi) < 0.
	\end{align}
\end{prop}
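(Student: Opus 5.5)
The plan is to reduce everything to the planar case of Lemma~\ref{lma:SignChangecondition}. The key fact is that regressing $\vec{y}$ on $(\vec{x},\vec{v})$ depends on $\vec{y}$ only through its orthogonal projection $\vec{y}_{\mathcal P}$ onto $\mathcal{P}=\mathrm{span}\{\vec{x},\vec{v}\}$. The identities \eqref{eq:IdentGamma}--\eqref{eq:IdentEta} then follow from an explicit orthonormal frame adapted to $\vec{x}$ and the two planes.

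\textbf{Step 1 (coordinates and the identities).} Let $e_1=\vec{x}/\Vert\vec{x}\Vert$. Let $e_2$ be the unit vector in $\mathrm{span}\{\vec{y},\vec{x}\}$ orthogonal to $e_1$, oriented so that $\langle \vec{y},e_2\rangle>0$. Let $e_3$ be the unit vector in $\mathcal{P}$ orthogonal to $e_1$, oriented so that $\langle e_2,e_3\rangle=\cos\alpha$; this is the definition of the dihedral angle. Write $e_3=\cos\alpha\, e_2+\sin\alpha\, e_4$ with $e_4\perp e_1,e_2$.

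Then $\vec{y}/\Vert\vec{y}\Vert=\cos\theta\, e_1+\sin\theta\, e_2$, and its projection onto $\mathcal{P}$ is $\cos\theta\, e_1+\sin\theta\cos\alpha\, e_3$. The component orthogonal to $\mathcal{P}$ has length $\sin\theta\sin\alpha$. Since $\gamma$ is the angle between $\vec{y}$ and $\mathcal{P}$, this length equals $\sin\gamma$, which is \eqref{eq:IdentGamma}. It follows that $\Vert\vec{y}_{\mathcal P}\Vert=\Vert\vec{y}\Vert\cos\gamma$.

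Computing the angle $\eta$ between $e_1$ and $\vec{y}_{\mathcal P}$ inside $\mathcal{P}$ gives
\[
\cos\eta=\frac{\cos\theta}{\cos\gamma},\qquad \sin\eta=\frac{\sin\theta\cos\alpha}{\cos\gamma}.
\]
Substituting these into the right side of \eqref{eq:IdentEta} yields $(\cos^2\theta+\sin^2\theta\cos^2\alpha)/\cos\gamma$. By \eqref{eq:IdentGamma}, the numerator equals $1-\sin^2\theta\sin^2\alpha=\cos^2\gamma$, so the right side is $\cos\gamma$, which is \eqref{eq:IdentEta}.

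Uniqueness of $\eta$ holds because both $\cos\eta$ and $\sin\eta$ are determined. This uses $\cos\gamma>0$, i.e.\ $\vec{y}_{\mathcal P}\neq 0$, which is guaranteed by $\gamma\in(0,\pi/2)$.

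\textbf{Step 2 (reduction of the sign question).} Since $\vec{x}\in\mathcal P$, we have $\langle \vec{y},\vec{x}\rangle=\langle \vec{y}_{\mathcal P},\vec{x}\rangle$. Hence $\sgn\hat\beta_x=\sgn\cos\theta=\sgn\cos\eta$, which is also the sign of the coefficient from regressing $\vec{y}_{\mathcal P}$ on $\vec{x}$.

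Moreover, $\vec{y}-\vec{y}_{\mathcal P}\perp\mathcal P$, so the least squares coefficients $(\hat\alpha_x,\hat\alpha_v)$ for $\vec{y}$ coincide with those for $\vec{y}_{\mathcal P}$. Therefore a sign reversal for $\vec{y}$ is equivalent to a sign reversal for the planar triple $(\vec{y}_{\mathcal P},\vec{x},\vec{v})$, all of which lie in $\mathcal P$.

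Applying Lemma~\ref{lma:SignChangecondition} to this triple, the angle between $\vec{y}_{\mathcal P}$ and $\vec{x}$ is $\eta$, between $\vec{x}$ and $\vec{v}$ is $\delta$, and between $\vec{y}_{\mathcal P}$ and $\vec{v}$ is $\phi$. This yields \eqref{eq:generalsignchangecondition}.

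\textbf{Main obstacle.} The main difficulty is bookkeeping rather than substance. The orientation conventions for $\alpha$ and $\eta$ must be consistent, so that the sign of $\sin\eta$, and hence the $+$ in \eqref{eq:IdentEta}, comes out as stated. I would fix the orientations of $e_2$ and $e_3$ explicitly as above.

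Degenerate cases must also be excluded: $\vec{y}\perp\mathcal P$ is ruled out by $\gamma<\pi/2$, and $\vec{v}$ collinear with $\vec{x}$ is ruled out by hypothesis. I would also check that Lemma~\ref{lma:SignChangecondition} applies with $\vec{y}_{\mathcal P}$ in place of $\vec{y}$. Its hypothesis is only that $\vec{v}\in\mathrm{span}\{\vec{y}_{\mathcal P},\vec{x}\}=\mathcal P$. This plane is all of $\mathcal P$ provided $\vec{y}_{\mathcal P}$ is not collinear with $\vec{x}$, i.e.\ $\sin\eta\ne0$. If $\cos\alpha=0$, so that $\sin\eta=0$, both $\hat\alpha_x$ and the condition can be checked directly: the condition reduces to $\cos^2\eta<\cos\eta\cos\delta\cos\phi$, which fails, consistent with $\hat\alpha_v=0$.
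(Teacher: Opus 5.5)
Your proof is correct and has the same skeleton as the paper's: identify $\eta$ via $\cos\eta=\cos\theta/\cos\gamma$ plus a formula for $\sin\eta$, then apply Lemma~\ref{lma:SignChangecondition} inside $\mathcal{P}$ with $\theta$ replaced by $\eta$.

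The difference is in the trigonometric part. The paper cites spherical trigonometry for \eqref{eq:IdentGamma}--\eqref{eq:IdentEta}. It then recovers $\sin\eta$ by substituting $\cos^2\eta=1-\sin^2\eta$ into \eqref{eq:IdentEta} and solving a quadratic with zero discriminant. You instead derive both identities from an adapted orthonormal frame and read off $\cos\eta$ and $\sin\eta$ directly. Both routes give $\sin\eta=\sin\theta\cos\alpha/\cos\gamma$.

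The paper's route is shorter given the citation. Yours is self-contained, and it exposes a convention the paper glosses over. Because $\alpha\in(0,\pi)$ is allowed, $\sin\eta$ carries the sign of $\cos\alpha$. So \eqref{eq:IdentEta} with a plus sign holds only if $\eta$ is a signed angle measured toward your $e_3$, or if $\alpha\le\pi/2$. The paper's claim that $\sin\eta\in(0,1)$ because ``all terms are strictly positive'' tacitly assumes $\cos\alpha>0$.

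You also place the uniqueness requirement correctly at $\cos\gamma>0$, where the paper writes $\gamma>0$. And you justify the reduction that the paper only asserts: $\sgn\hat\beta_x=\sgn\cos\theta=\sgn\cos\eta$, and the coefficients on $(\vec{x},\vec{v})$ depend on $\vec{y}$ only through $\vec{y}_{\mathcal P}$.

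Your separate check of the case $\cos\alpha=0$ is correct but unnecessary. The Frisch--Waugh--Lovell identity $\langle \vec{y}-\Pi_{\vec{v}}\vec{y},\vec{x}-\Pi_{\vec{v}}\vec{x}\rangle=\Vert\vec{y}\Vert\Vert\vec{x}\Vert(\cos\theta-\cos\phi\cos\delta)$, used to prove Lemma~\ref{lma:SignChangecondition}, never uses coplanarity. So the lemma applies to $(\vec{y}_{\mathcal P},\vec{x},\vec{v})$ even when $\vec{y}_{\mathcal P}$ is collinear with $\vec{x}$.
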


The identifiability of the angle $\eta$ is central to Proposition \ref{prop:GeneralSignChangeCondition}. For any fixed $\vec{y}$ and plane $\mathcal{P}$, $\eta$ is uniquely determined. Once $\eta$ is fixed, the projection of $\vec{y}$ onto $\mathcal{P}$, together with $\vec{x}$ and $\vec{v}$, lies within the same plane, reducing the problem to the planar case of Lemma \ref{lma:SignChangecondition}. As the dihedral angle $\alpha \rightarrow 0$, $ \eta \rightarrow \theta $, recovering Lemma \ref{lma:SignChangecondition}. Conversely, as $\alpha \rightarrow \pi/2$, $\eta \rightarrow 0$, and no $ \vec{v} \in \mathcal{P}$ can induce a sign reversal to $\hat{\beta}_x$. 

Taken together, the geometric intuition in Figure \ref{fig:parallelogram-cases}, Lemma \ref{lma:SignChangecondition} and Proposition \ref{prop:GeneralSignChangeCondition} imply that when the vectors $\vec{y}$ and $\vec{x}$ form an acute angle, any auxiliary vector whose direction lies within this acute angular sector, or equivalently within its vertical angle, induces a sign change in the coefficient on $\vec{x}$. When $\vec{y}$ and $\vec{x}$ form an obtuse angle, any auxiliary vector whose direction lies within the supplementary angular sector of this obtuse angle, or equivalently within its vertical angle, likewise leads to a sign change in the coefficient on $\vec{x}$.

\subsection{Proof of Lemma \ref{lma:SignChangecondition} and Proposition \ref{prop:GeneralSignChangeCondition}}

\begin{proof}[\textbf{Proof of Lemma \ref{lma:SignChangecondition}}]
	By the definition of the angle between two nonzero vectors, 
	\begin{align*}
		\cos \theta = \frac{\langle \vec{y}, \vec{x} \rangle}{\sqrt{\langle \vec{y}, \vec{y} \rangle \langle \vec{x}, \vec{x} \rangle}}, \quad \cos \delta = \frac{\langle \vec{x}, \vec{v} \rangle}{\sqrt{\langle \vec{v}, \vec{v} \rangle \langle \vec{x}, \vec{x} \rangle}}, \quad \cos \phi = \frac{\langle \vec{y}, \vec{v}  \rangle}{\sqrt{\langle \vec{y}, \vec{y} \rangle \langle \vec{v}, \vec{v} \rangle}}. 
	\end{align*}
	For the projection of $\vec{y}$ onto $\vec{x}$, the projection coefficient is $\hat{\beta}_x = \frac{\langle \vec{y}, \vec{x} \rangle}{\langle \vec{x}, \vec{x} \rangle}$, so that $ \cos \theta = \hat{\beta}_x\sqrt{\frac{\langle \vec{x}, \vec{x} \rangle}{\langle \vec{y},\vec{y} \rangle}} $.  
	
	By the Frisch-Waugh-Lovell theorem, the partial effect of $\vec{x}$ after controlling for $\vec{v}$ is proportional to $\hat{\alpha}_x \propto \langle \vec{y} - \Pi_{\vec{v}} \vec{y}, \vec{x} - \Pi_{\vec{v}} \vec{x} \rangle $, where $\Pi_{\vec{v}}$ denotes the orthogonal projection onto $\mathrm{span}(\vec{v})$. Since 
	\begin{align*}
		\Pi_{\vec{v}} \vec{y} = \vec{v}\sqrt{\frac{\langle \vec{y},\vec{y} \rangle}{\langle \vec{v},\vec{v} \rangle}}\cos \phi, \quad \Pi_{\vec{v}} \vec{x} = \vec{v}\sqrt{\frac{\langle \vec{x},\vec{x} \rangle}{\langle \vec{v},\vec{v} \rangle}}\cos \delta,
	\end{align*}
	it follows that
	\begin{align*}
		\langle \vec{y} - \Pi_{\vec{v}} \vec{y}, \vec{x} - \Pi_{\vec{v}} \vec{x} \rangle = \sqrt{\langle \vec{y},\vec{y} \rangle \langle \vec{x},\vec{x} \rangle} (\cos \theta - \cos \phi \cos \delta).
	\end{align*}
	Hence $ \hat{\alpha}_x \propto \cos \theta - \cos \phi \cos \delta $, while $ \hat{\beta}_x \propto \cos \theta $, implying 
	\begin{align*}
		\sgn \hat{\beta}_x = - \sgn \hat{\alpha}_x \text{~~iff~~} \cos\theta (\cos\theta - \cos\phi \cos \delta) < 0. 
	\end{align*}
\end{proof}

\begin{proof}[\textbf{Proof of Proposition \ref{prop:GeneralSignChangeCondition}}]
	The identities \eqref{eq:IdentGamma} and \eqref{eq:IdentEta} follow directly from classical spherical trigonometry; see, for example, \citet[p.~196]{palmer1934plane}.
	
	To identify $\eta$, observe that $\alpha$ and $\theta$ are fixed by the observed vectors $\vec{y}, \vec{x}$ and the plane $\mathcal{P}$. Since $\gamma $ is the acute angle between $ \vec{y} $ and $ \mathcal{P} $, Eq. \eqref{eq:IdentGamma} uniquely determines $\gamma \in (0, \pi/2)$. Substituting $\cos^2 \eta = 1 - \sin^2 \eta $ into Eq. \eqref{eq:IdentEta} yields a quadratic equation in $\sin \eta$:
	\begin{align*}
		A \sin^2 \eta - 2 B \sin \eta + C = 0,
	\end{align*}
	where 
	\begin{align*}
		A = \sin^2 \theta \cos^2 \alpha + \cos^2 \theta, \quad
		B = \sin \theta \cos \alpha \cos \gamma, \quad
		C = \cos^2 \gamma - \cos^2 \theta.
	\end{align*}
	The discriminant is 
	\begin{align} \label{eq:pf2}
		\Delta = 4 B^2 - 4 A C  = 4( \sin^2 \theta \cos^2 \alpha \cos^2 \theta - \cos^2 \theta \cos^2 \gamma + \cos^4 \theta).
	\end{align}
	Using the identity $\sin \gamma = \sin \theta \sin \alpha$ from \eqref{eq:IdentGamma} implies $\cos^2 \gamma = \cos^2 \theta + \sin^2 \theta \cos^2 \alpha$, which yields $\Delta = 0$ and hence a unique real solution, 
	\begin{align*}
		\sin \eta = \frac{B}{A} = \frac{\sin \theta \cos \alpha \cos \gamma}{\sin^2 \theta \cos^2 \alpha + \cos^2 \theta} \leq \left(\frac{\sin^2 \theta \cos^2 \alpha}{\sin^2 \theta \cos^2 \alpha + \cos^2 \theta}\right)^{1/2} < 1
	\end{align*}
	where the inequality follows from the Cauchy-Schwarz inequality and the identity \eqref{eq:IdentGamma}. Since all terms are strictly positive, $\sin \eta$ is uniquely identified in $(0,1)$. 
	
	A projection argument further gives $ \cos \eta = \frac{\cos\theta}{\cos\gamma} $, as $\vec{x} \in \mathcal{P}$ is orthogonal to the component of $\vec{y}$ perpendicular to $\mathcal{P}$. Given $\gamma > 0$, both $\sin \eta$ and $\cos \eta$ are uniquely determined, thereby identifying $\eta$. 
	
	Finally, within $\mathcal{P}$, the vectors $ \vec{x} $, $ \vec{v} $ and the projection of $ \vec{y} $ onto $ \mathcal{P} $ are coplanar, so Lemma \ref{lma:SignChangecondition} applies in $\mathcal{P}$ with $\theta$ replaced by $\eta$, yielding $\cos \eta (\cos \eta  - \cos \delta \cos \phi) < 0$.
\end{proof}

\section{Connections to related phenomena} \label{app:connections}
This section relates SHAVE to several well-documented empirical and behavioral phenomena.  

\textbf{Spurious correlation.}
SHAVE can be interpreted as a manifestation of spurious correlation, where statistical association arises without a causal link. Such correlations may result from latent confounders or sampling variation and have been documented in both low- \citep{aldrich1995correlations} and high-dimensional \citep{fan2008sure,fan2014challenges} settings. In our framework, although the auxiliary variable $V$ has no causal effect on $Y$, it may exhibit nonzero sample correlations with both $Y$ and $X$. When $V$ falls within the sign change set, these correlations can induce coefficient reversals in finite samples. This mechanism is consistent with classical examples, including spurious significance in time series driven by shared shocks or trends. In all cases, the observed relationships are induced by sample geometry rather than structural dependence. 

\textbf{The $p$-hacking phenomenon.}
SHAVE provides a formal mechanism underlying $p$-hacking, defined as the selective reporting of statistically significant results. Theorem \ref{thm:hackontstat} shows that, appropriately chosen auxiliary variables can generate arbitrarily large $t$-statistics, even in the absence of a true association between $ X $ and $ Y $. Selective reporting of such specifications contributes to publication bias across disciplines \citep{simmons2011false,simonsohn2014p,head2015extent,brodeur2020methods,lindsay2015replication,elliott2019detecting}. This perspective highlights how apparently robust findings may arise from specification search rather than underlying signal.

\textbf{Sponsorship bias.}
SHAVE also sheds light on sponsorship bias, where empirical conclusions systematically favor particular interests \citep{lesser2007relationship}. The existence of a positive-measure sign change set implies that favorable results can, in principle, be obtained through variable exploration, even in large samples. A prominent example is the Harvard admissions case, where opposing experts reach contradictory conclusions about racial effects \citep{arcidiacono2018exhibit,card2018exhibit}, illustrating how model specification can align with differing incentives. Similar patterns have been documented in industry-funded research, including studies on sugar \citep{kearns2016sugar}, tobacco \citep{barnes1998review}, nutrition \citep{lesser2007relationship} and biomedical outcomes \citep{bekelman2003scope}. These examples suggest that the flexibility in model specification can contribute to systematic bias under conflicting interests.

\textbf{Research misconduct.}
In extreme cases, SHAVE corresponds to research misconduct, including falsification or fabrication of empirical results. Recent studies estimate that roughly 3--8\% of researchers have engaged in such practices \citep{xie2021prevalence,gopalakrishna2021prevalence}. In political science, \citet{lenz2021achieving} show that many published findings hinge on undisclosed specification choices. The examples in Section \ref{subsec:rateye} further illustrate how easily sign hacking can generate convincing yet misleading results in large datasets.    

Artificially significant findings can distort the scientific record and undermine credibility. High-profile cases, including Dr. Piero Anversa's falsified stem-cell research \citep{kolata2018harvard}, Dr. Erin Potts-Kant's misconduct at Duke University \citep{stempel2019duke}, and the STAP cell scandal \citep{cryanoski2014papers}, illustrate the broader consequences of empirical manipulation. These examples underscore the importance of transparency in empirical analysis for maintaining research integrity. 

\section{Additional empirical evidence} \label{app:additional-empirical}

We provide further evidence on the feasibility and ease of implementing SHAVE in practice. Following the same setup as in the main analysis, we set probe 1389163\_at as $Y$ and let a single probe (1369353\_at) or two probes (1369353\_at and 1370429\_at) from the 24 TRIM32-associated probes form $X_{\mathcal{I}_1}$. The control variables $X_{\mathcal{C}}$ are drawn from the remaining associated probes with cardinality $|\mathcal{C}| \in \{6, \dots, 16\}$, and are randomly sampled up to 10,000 times per size, yielding 110,000 specifications in total. For each specification, we estimate the baseline regression of $Y$ on $X_{\mathcal{I}_1}$ and $X_{\mathcal{C}}$ and search over the remaining 19,007 probes for an auxiliary variable $V$ that induces a sign reversal in $X_{\mathcal{I}_1}$ while maintaining statistical significance at the 10\% level. If simultaneous sign reversal is infeasible, we require that at least one coefficient reverses sign while both post-inclusion estimates remain significant.

Figure \ref{fig:selv} reports, for each control set size, both the number of candidate variables $V$ and the corresponding number of control variable combinations. Across all configurations, multiple variables $V$ consistently induce sign reversals, with a mild decline as the control set size increases. Moreover, each such variable typically operates across multiple specifications with the same control size. 

\begin{figure}[htbp]
	\centering
	\begin{subfigure}[t]{0.48\textwidth}
		\centering
		\includegraphics[width=\linewidth]{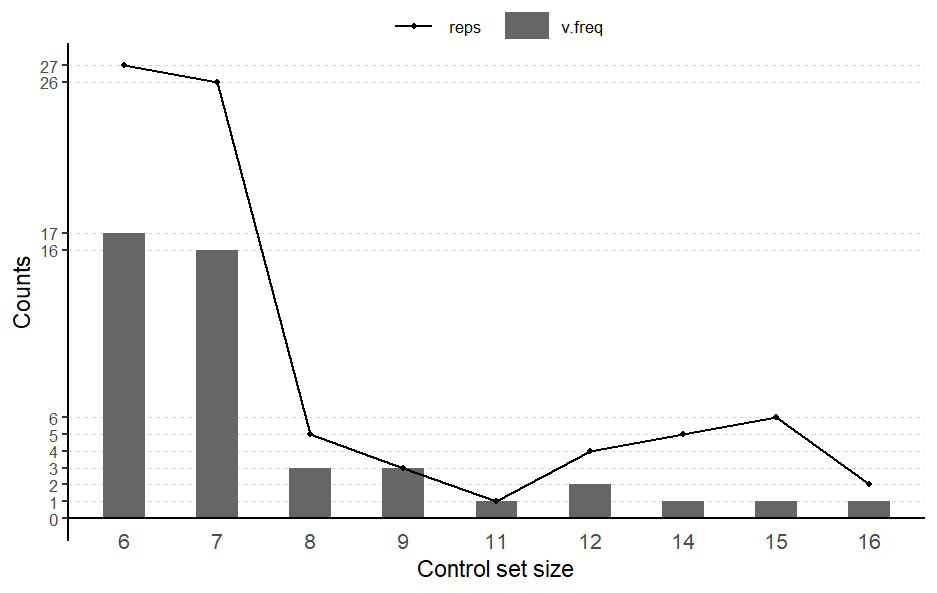}
		\caption{Single $X_{\mathcal{I}_1}$ case}
	\end{subfigure}
	\hfill
	\begin{subfigure}[t]{0.48\textwidth}
		\centering
		\includegraphics[width=\linewidth]{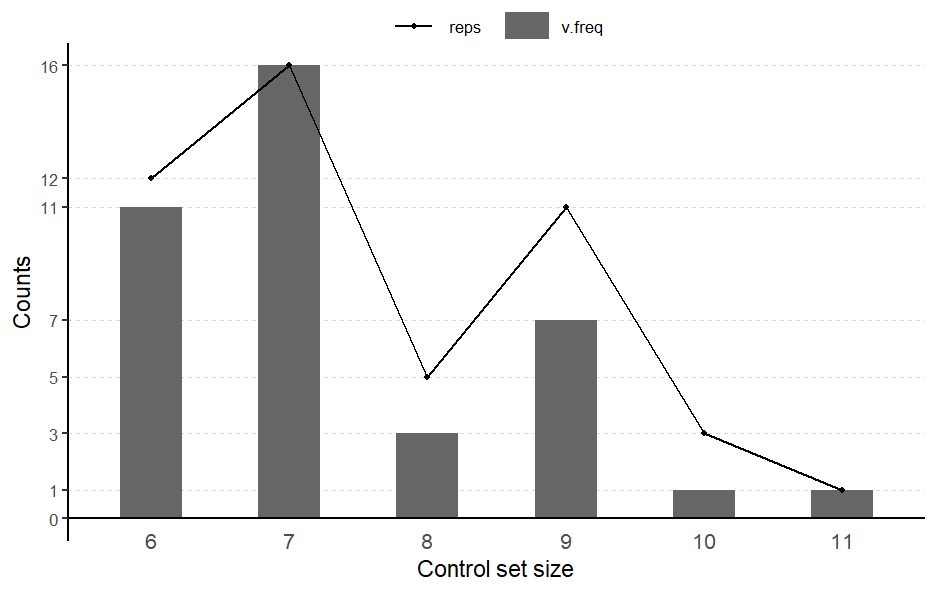}
		\caption{Two-variable $X_{\mathcal{I}_1}$ case}
	\end{subfigure}
	
	\caption{Number of selected $V$s and corresponding model specifications across control set sizes}
	\label{fig:selv}
\end{figure}

Table \ref{tab:signchangewithcontrol} presents representative regression results illustrating that a given variable $V$ can induce sign changes across different control sets. When $X_{\mathcal{I}_1}$ contains only probe 1369353\_at, baseline estimates are consistently negative across specifications, whereas including $V$ yields significantly positive coefficients at the 10\% level. When $X_{\mathcal{I}_1}$ includes two probes, both baseline estimates are negative and insignificant, while the inclusion of $V$ produces a significantly negative coefficient for 1369353\_at and a significantly positive coefficient for 1370429\_at. In addition, models including $V$ consistently achieve lower BIC values and $F$-tests reject the null that the coefficient on $V$ is zero, providing conventional statistical support for its inclusion.

\begin{table}[ht]
	\footnotesize
	\centering
	\caption{Sign change phenomenon for $X_{\mathcal{I}_1}$}
	\scalebox{0.95}{
		\begin{threeparttable}
			\begin{tabular}{cccccccc}
				\toprule
				\textbf{Cat.} & \textbf{Var.} & \textbf{M1} & \textbf{M2} & \textbf{M3} & \textbf{M4} & \textbf{M5} & \textbf{M6} \\
				\midrule
				\multicolumn{8}{l}{\textbf{Panel A. Single $X_{\mathcal{I}_1}$ Case}} \\
				\midrule
				\multirow[t]{2}[1]{*}{$X_{\mathcal{I}_1}$} & \multirow[t]{2}[1]{*}{1369353\_at} & -0.027 & $0.236^{*}$ & -0.029 & $0.222^{*}$ & -0.002 & $0.223^{*}$ \\
				&       & (0.123) & (0.133) & (0.120) & (0.130) & (0.118) & (0.126) \\
				\multirow[t]{2}[0]{*}{$V$} & \multirow[t]{2}[0]{*}{1372894\_at} &       & $0.301^{***}$ &       & $0.286^{***}$ &       & $0.276^{***}$ \\
				&       &       & (0.075) &       & (0.074) &       & (0.073) \\
				$|\mathcal{C}|$ &       & 12    & 12    & 14    & 14    & 15    & 15    \\
				\midrule
				BIC   &       & 257.85 & 245.85 & 258   & 246.59 & 255.08 & 244.13 \\
				\midrule
				\multicolumn{8}{l}{\textbf{Panel B. Two-variable $X_{\mathcal{I}_1}$ Case}} \\
				\midrule
				\multirow[t]{4}[1]{*}{$X_{\mathcal{I}_1}$} & \multirow[t]{2}[1]{*}{1369353\_at} & -0.151 & $-0.183^{**}$ & -0.142 & $-0.176^{*}$ & -0.139 & $-0.158^{*}$ \\
				&       & (0.094) & (0.088) & (0.106) & (0.098) & (0.093) & (0.087) \\
				& \multirow[t]{2}[0]{*}{1370429\_at} & -0.037 & $0.194^{*}$ & -0.026 & $0.204^{*}$ & -0.003 & $0.194^{*}$ \\
				&       & (0.110) & (0.117) & (0.118) & (0.119) & (0.107) & (0.110) \\
				\multirow[t]{2}[0]{*}{$V$} & \multirow[t]{2}[0]{*}{1375833\_at} &       & $-0.341^{***}$ &       & $-0.359^{***}$ &       & $-0.316^{***}$ \\
				&       &       & (0.081) &       & (0.077) &       & (0.076) \\
				$|\mathcal{C}|$ &       & 7     & 7     & 8     & 8     & 9     & 9     \\
				\midrule
				BIC   &       & 240.81 & 227.9 & 246.78 & 229.82 & 237.54 & 224.34 \\
				\bottomrule
			\end{tabular}%
			\begin{tablenotes}
				\item Notes: Standard errors are reported in parentheses. $^{***} p < 0.01$, $^{**} p < 0.05$, $^{*} p < 0.1$. $|\mathcal{C}|$ denotes the number of controls.   
			\end{tablenotes}
		\end{threeparttable}
	}
	\label{tab:signchangewithcontrol}
\end{table}

Taken together, these findings provide empirical evidence for the presence of SHAVE in practice. Because multiple specifications admit auxiliary variables that induce sign reversals while remaining statistically well-supported, selective reporting of such specifications can produce internally consistent yet potentially misleading conclusions. For instance, a specification including $V$ may be presented as the primary model, with alternative specifications serving as robustness checks, while maintaining stable signs and significance levels across reported models. Since the true data-generating process is unobserved, this intrinsic model uncertainty creates scope for strategic manipulation.

We conduct another analysis to illustrate the potential for SHAVE. We again take probe 1389163\_at as the response variable $Y$. More generally, we select $m=1$ or $m=2$ probes from the 24 TRIM32-associated probes to form $ X_{\mathcal{I}_1} $, representing variables with desired signs, and use the remaining probes (23 or 22) as controls $ X_{\mathcal{I}_1^c} $. We estimate the baseline regression of $ Y $ on $ X_{\mathcal{I}_1} $ and $ X_{\mathcal{I}_1^c} $ and then search among the remaining 19,007 probes for variables $V$, whose inclusion reverses the signs of the coefficients in $X_{\mathcal{I}_1}$. 

A probe $V$ is retained if it satisfies two criteria: (i) in the regression of $ Y $ on $ X_{\mathcal{I}_1} $ and $ V $, the coefficients of $X_{\mathcal{I}_1}$ reverse sign relative to the baseline and are significant at the 5\% level; and (ii) in the regression including $X_{\mathcal{I}_1^c}$, the reversed signs persist.

Table \ref{tab:Vselection} reports the probes $ V $ that satisfying these criteria. When $ X_{\mathcal{I}_1} $ contains a single variable, multiple probes are hackable, each associated with a large number of candidate auxiliary variables. When $X_{\mathcal{I}_1}$ contains two variables, several probe pairs admit multiple auxiliary variables capable of inducing simultaneous sign reversals.

\begin{table}[t]
	\centering
	\caption{Choice and selection frequencies of $ V $ for specific $ X_{\mathcal{I}_1} $}
	\begin{threeparttable}
		\begin{tabular}{lccccc}
			\toprule
			\multicolumn{6}{l}{\textbf{Panel A. Single-variable specification}} \\
			\midrule
			$X_{\mathcal{I}_1}$    & 1370429\_at & 1393979\_at & 1381787\_at & 1383673\_at & 1393382\_at \\
			$V$ (freq.)     & 1008  & 282   & 1171  & 1761  & 774 \\
			\midrule
			\multicolumn{6}{l}{\textbf{Panel B. Two-variable specification}} \\
			\midrule
			\multirow[t]{2}[1]{*}{$X_{\mathcal{I}_1}$} & 1370429\_at & 1370429\_at & 1370429\_at & 1370429\_at & 1393979\_at \\
			& 1393979\_at & 1381787\_at & 1383673\_at & 1393382\_at & 1381787\_at \\
			$V$ (freq.)   & 52    & 43    & 202   & 244   & 26 \\
			\addlinespace[3pt]
			\multirow[t]{2}[0]{*}{$X_{\mathcal{I}_1}$} & 1393979\_at & 1393979\_at & 1381787\_at & 1381787\_at & 1383673\_at \\
			& 1383673\_at & 1393382\_at & 1383673\_at & 1393382\_at & 1393382\_at \\
			$V$ (freq.)  & 61    & 76    & 142   & 150   & 167 \\
			\bottomrule
		\end{tabular}%
		\begin{tablenotes}
			\footnotesize
			\item Notes: Row $ X_{\mathcal{I}_1} $ lists the variables included in the specification. ``$ V $ (freq.)'' reports the number of candidate $V$s that induce a statistically significant sign reversal when included from 19,007 probes.   
		\end{tablenotes}
	\end{threeparttable}
	\label{tab:Vselection}
\end{table}

Table \ref{tab:signchangepresentation} presents representative regression results. Model M1 reports the baseline regression. M2 augments the model with $V$, M3 includes all controls, M4 retains a subset of controls preserving the sign reversal, and M5 reports the marginal regression. In each panel, $V$ is selected to minimize BIC among all admissible candidates.

\begin{table}[ht]
	\footnotesize
	\centering
	\caption{Sign change phenomenon for $X_{\mathcal{I}_1}$}
	\scalebox{0.95}{
		\begin{threeparttable}
			\begin{tabular}{ccccccc}
				\toprule
				\textbf{Cat.} & \textbf{Var.} & \textbf{M1} & \textbf{M2} & \textbf{M3} & \textbf{M4} & \textbf{M5} \\
				\midrule 
				\multicolumn{7}{l}{\textbf{Panel A. Single $X_{\mathcal{I}_1}$ Specification}} \\
				\midrule
				\addlinespace[2pt]
				$X_{\mathcal{I}_1}$ & 1370429\_at & 0.022 & $-$0.641$^{***}$ & $-$0.006 & $-$0.166$^{**}$ & $-$0.687$^{***}$ \\ 
				& & (0.125) & (0.074) & (0.118) & (0.083) & (0.067) \\ 
				$V$ & 1383640\_at & -- & 0.102 & $-$0.244$^{***}$ & $-$0.206$^{***}$ & -- \\ 
				& &  & (0.074) & (0.067) & (0.067) &  \\ 
				$|\mathcal{C}|$ & & 23 & -- & 23 & 14 & -- \\ 
				\midrule 
				$BIC$ & & 265.98 & 275.44 & 255.23 & 241.28 & 272.55 \\  
				\midrule
				\multicolumn{7}{l}{\textbf{Panel B. Two-variable $X_{\mathcal{I}_1}$ Specification}} \\
				\midrule
				\addlinespace[2pt]
				\multirow[t]{4}{*}{$X_{\mathcal{I}_1}$} & 1370429\_at & 0.022 & $-$0.751$^{***}$ & $-$0.072 & $-$0.244$^{**}$ & $-$0.562$^{***}$ \\ 
				& & (0.125) & (0.118) & (0.123) & (0.116) & (0.118) \\ 
				& 1381787\_at & 0.015 & $-$0.278$^{**}$ & $-$0.018 & $-$0.206$^{**}$ & $-$0.151 \\ 
				& & (0.130) & (0.113) & (0.125) & (0.091) & (0.118) \\ 
				$V$ & 1375775\_at & -- & 0.406$^{***}$ & 0.287$^{***}$ & 0.259$^{***}$ & -- \\ 
				& &  & (0.093) & (0.090) & (0.084) &  \\ 
				$|\mathcal{C}|$ & & 23 & -- & 23 & 12 & -- \\
				\midrule
				$BIC$ & & 265.98 & 262.22 & 258.6 & 243.59 & 275.67 \\ 
				$N$ & & 120 & 120 & 120 & 120 & 120 \\ 
				\bottomrule
			\end{tabular}%
			\begin{tablenotes}
				\item Notes: Standard errors are reported in parentheses. $^{***} p < 0.01$, $^{**} p < 0.05$, $^{*} p < 0.1$. $|\mathcal{C}|$ denotes the cardinality of the subset of $X_{\mathcal{I}_1^c}$ used as control variables. M1 includes all 24 probes from \citet{huang2008adaptive}. M2 adds $V$ to $X_{\mathcal{I}_1}$ without additional controls. M3 includes all 24 probes and $V$, selected by the smallest BIC value. M4 includes $V$ and a subset of $X_{\mathcal{I}_1^c}$ yielding a significant sign reversal in $X_{\mathcal{I}_1}$. M5 includes $X_{\mathcal{I}_1}$ only.  
			\end{tablenotes}
		\end{threeparttable}
	}
	\label{tab:signchangepresentation}
\end{table}

Across both the single-variable and two-variable settings, inclusion of $V$ reverses the signs of the coefficients in $X_{\mathcal{I}_1}$ and renders them statistically significant. The induced sign reversal persists after reintroducing control variables, indicating that the effect is not driven by a particular specification. For comparison, marginal regressions may yield similar signs but omit relevant covariates, and therefore do not provide a reliable benchmark.

Standard model diagnostics further favor the manipulated specifications. Models including $V$ typically achieve lower BIC values than the baseline and marginal specifications, and $F$-tests reject the null that the coefficient on $V$ is zero. These patterns are consistent across both single- and two-variable configurations, suggesting that the inclusion of $V$ is statistically well-supported under conventional criteria.

\section{Lineup detection: additional simulations} \label{app:lineup}
We implement the lineup detection strategy for the auxiliary variable $V$ identified in Section \ref{subsec:rateye}. In the unconditional version, each $V$ is combined with all remaining genes, including $X_{\mathcal{I}_1}$, $X_{\mathcal{I}_1^c}$ and other probes. In the conditional version, $X_{\mathcal{I}_1}$ is always included, and each $V$ is combined with $X_{\mathcal{I}_1^c}$ and the remaining probes. Lasso, SCAD and MCP are applied with cross-validated tuning parameters and repeated 50 times to account for selection instability. We summarize the selection frequency of each $V$ by reporting the proportion with positive frequency, $P(\pi_V > 0)$, and the maximum selection frequency. 

Tables \ref{tab:sel.v.var1} and \ref{tab:sel.v.var2} report the results for the single-variable and two-variable specifications. Although many auxiliary variables $V$ can induce sign changes, most are rarely selected, as indicated by the uniformly small values of $P(\pi_V > 0)$ across all methods. A few isolated variables exhibit relatively large maximum selection frequencies. Selection frequencies tend to increase when conditioning on the inclusion of $X_{\mathcal{I}_1}$. Overall, these results suggest that lineup detection can be informative, but does not reliably identify all problematic auxiliary variables. 

\begin{table}[ht]
	\footnotesize
	\centering
	\caption{Lineup detection of $V$s for single-variable specification}
	\scalebox{0.95}{
		\begin{threeparttable}
			\begin{tabular}{lccccccc}
				\toprule
				\multirow{2}[2]{*}{$X_{\mathcal{I}_1}$} & \multirow{2}[2]{*}{$V$ (freq.)} & \multicolumn{2}{c}{Lasso} & \multicolumn{2}{c}{SCAD} & \multicolumn{2}{c}{MCP} \\
				\cmidrule(lr){3-4} \cmidrule(lr){5-6} \cmidrule(lr){7-8}
				&       & $P(\pi_V>0)$ & $\max(\pi_V)$ & $P(\pi_V>0)$ & $\max(\pi_V)$ & $P(\pi_V>0)$ & $\max(\pi_V)$ \\
				\midrule
				\multicolumn{8}{l}{Panel A. Unconditional version}    \\
				\midrule
				1370429\_at & 1008  & 0.026 & 1.000 & 0.006 & 1.000 & 0.009 & 1.000 \\
				1381787\_at & 1171  & 0.032 & 1.000 & 0.010 & 1.000 & 0.009 & 0.980 \\
				1383673\_at & 1761  & 0.020 & 1.000 & 0.005 & 1.000 & 0.005 & 0.960 \\
				1393382\_at & 774   & 0.025 & 1.000 & 0.003 & 1.000 & 0.005 & 0.440 \\
				1393979\_at & 282   & 0.011 & 1.000 & 0.004 & 1.000 & 0.000 & 0.000 \\
				\midrule
				\multicolumn{8}{l}{Panel B. Conditional version} \\
				\midrule
				1370429\_at & 1008  & 0.030 & 1.000 & 0.018 & 1.000 & 0.007 & 1.000 \\
				1381787\_at & 1171  & 0.053 & 1.000 & 0.019 & 1.000 & 0.008 & 0.900 \\
				1383673\_at & 1761  & 0.035 & 1.000 & 0.021 & 1.000 & 0.010 & 1.000 \\
				1393382\_at & 774   & 0.032 & 1.000 & 0.009 & 0.920 & 0.005 & 1.000 \\
				1393979\_at & 282   & 0.021 & 1.000 & 0.018 & 0.800 & 0.007 & 0.500 \\
				\bottomrule
			\end{tabular}%
			\begin{tablenotes}
				\item Notes: For each probe $X_{\mathcal{I}_1}$, the table summarizes the selection frequencies of candidate variables $V$ that induce sign reversal. $V$ (freq.) denotes the number of such variables. For each method, $P(\pi_V > 0)$ is the proportion of $V$ with positive selection frequency across 50 repetitions, and $\max(\pi_V)$ is the maximum selection frequency among them. Panel A reports the unconditional version, while Panel B reports the conditional version. 
			\end{tablenotes}
		\end{threeparttable}
	}
	\label{tab:sel.v.var1}
\end{table}

\begin{table}[ht]
	\footnotesize
	\centering
	\caption{Lineup detection of $V$s for two-variable specification}
	\scalebox{0.95}{
		\begin{threeparttable}
			\begin{tabular}{llccccccc}
				\toprule
				\multicolumn{2}{c}{\multirow{2}[2]{*}{$X_{\mathcal{I}_1}$}} & \multirow{2}[2]{*}{$V$ (freq.)} & \multicolumn{2}{c}{Lasso} & \multicolumn{2}{c}{SCAD} & \multicolumn{2}{c}{MCP} \\
				\cmidrule(lr){4-5} \cmidrule(lr){6-7} \cmidrule(lr){8-9}
				\multicolumn{2}{c}{} &       & $P(\pi_V>0)$ & $\max(\pi_V)$ & $P(\pi_V>0)$ & $\max(\pi_V)$ & $P(\pi_V>0)$ & $\max(\pi_V)$ \\
				\midrule
				\multicolumn{9}{l}{Panel A. Unconditional version} \\
				\midrule
				1370429\_at & 1381787\_at & 43    & 0.093 & 0.980 & 0.000 & 0.000 & 0.000 & 0.000 \\
				1370429\_at & 1383673\_at & 202   & 0.030 & 0.900 & 0.000 & 0.000 & 0.005 & 0.120 \\
				1370429\_at & 1393382\_at & 244   & 0.037 & 0.980 & 0.000 & 0.000 & 0.004 & 0.420 \\
				1370429\_at & 1393979\_at & 52    & 0.019 & 0.800 & 0.000 & 0.000 & 0.000 & 0.000 \\
				1381787\_at & 1383673\_at & 142   & 0.070 & 0.800 & 0.000 & 0.000 & 0.007 & 0.060 \\
				1381787\_at & 1393382\_at & 150   & 0.027 & 1.000 & 0.007 & 1.000 & 0.007 & 0.440 \\
				1383673\_at & 1393382\_at & 167   & 0.030 & 0.640 & 0.006 & 1.000 & 0.006 & 0.420 \\
				1393979\_at & 1381787\_at & 26    & 0.038 & 0.800 & 0.000 & 0.000 & 0.000 & 0.000 \\
				1393979\_at & 1383673\_at & 61    & 0.049 & 1.000 & 0.016 & 1.000 & 0.000 & 0.000 \\
				1393979\_at & 1393382\_at & 76    & 0.000 & 0.000 & 0.000 & 0.000 & 0.000 & 0.000 \\
				\midrule
				\multicolumn{9}{l}{Panel B. Conditional version} \\
				\midrule
				1370429\_at & 1381787\_at & 43    & 0.140 & 1.000 & 0.047 & 0.840 & 0.000 & 0.000 \\
				1370429\_at & 1383673\_at & 202   & 0.054 & 1.000 & 0.025 & 0.900 & 0.015 & 0.800 \\
				1370429\_at & 1393382\_at & 244   & 0.041 & 1.000 & 0.016 & 0.960 & 0.008 & 0.840 \\
				1370429\_at & 1393979\_at & 52    & 0.019 & 1.000 & 0.019 & 0.920 & 0.000 & 0.000 \\
				1381787\_at & 1383673\_at & 142   & 0.127 & 1.000 & 0.092 & 1.000 & 0.049 & 0.440 \\
				1381787\_at & 1393382\_at & 150   & 0.047 & 1.000 & 0.040 & 0.740 & 0.013 & 1.000 \\
				1383673\_at & 1393382\_at & 167   & 0.042 & 1.000 & 0.024 & 1.000 & 0.006 & 1.000 \\
				1393979\_at & 1381787\_at & 26    & 0.077 & 1.000 & 0.038 & 0.560 & 0.000 & 0.000 \\
				1393979\_at & 1383673\_at & 61    & 0.066 & 1.000 & 0.066 & 1.000 & 0.049 & 0.900 \\
				1393979\_at & 1393382\_at & 76    & 0.039 & 0.880 & 0.013 & 0.760 & 0.000 & 0.000 \\
				\bottomrule
			\end{tabular}%
			\begin{tablenotes}
				\item Notes: For each pair of probes $X_{\mathcal{I}_1}$, the table summarizes the selection frequencies of candidate variables $V$ that induce sign reversal. $V$ (freq.) denotes the number of such variables. For each method, $P(\pi_V > 0)$ is the proportion of $V$ with positive selection frequency across 50 repetitions, and $\max(\pi_V)$ is the maximum selection frequency among them. Panel A reports the unconditional version, while Panel B reports the conditional version. 
			\end{tablenotes}
		\end{threeparttable}
	}
	\label{tab:sel.v.var2}
\end{table}

We next evaluate the ability of the detection procedure to reject spurious auxiliary variables. The data-generating process follows Model \eqref{model:simulation}, with $n = 100$, $\beta_1 = 0.2$ and $\sigma^2 = 0.5$. We generate $1 \times 10^6$ variables from the standard normal distribution and identify a variable $V$ that induces $\sgn \hat{\beta}_1 = -\sgn \hat{\alpha}_1$ \footnote{Simulations were performed on an AMD Ryzen 9 7950X 16-Core Processor (4.50 GHz) using R version 4.4.1, one variable $V$ is identified after 923,053 iterations using random seed 14}. To assess robustness, we augment the model with additional variables $W_{j}, j = 1, \dots, 499$\footnote{Including all $1\times 10^6-1$ variables in $W$ is infeasible due to the sample size of 100, which may introduce additional instability in variable selection via cross-validation}, generated independently from the same distribution and not inducing sign reversal. We then construct perturbed variables $\tilde{W}$ under three settings:
\begin{description}
	\item[Case 1] $\tilde{W}_{j} = \rho \cdot W_{j} + \sqrt{1-\rho^2} \cdot Z_j, j = 1, \dots, 499$, where $\rho \in \{0.5, 0.7, 0.9\}$ and $Z_j \sim N(0,1)$ independently. This case introduces a correlation between $\tilde{W}_j$ and the original variable $W_j$.
	\item[Case 2] $\tilde{W}_{j} \sim N(0,1)$ independently for all $j$. In this case, $\tilde{W}$ is completely independent of the original $W$. 
	\item[Case 3] $\tilde{W} \sim N(\mathbf{0}, \Sigma)$ with $\mathbf{0}$ a $499 \times 1$ zero vector and $\Sigma_{i,j} = \rho^{|i-j|}$, for $i,j = 1,\dots, 499$ with $\rho \in \{0.5, 0.7, 0.9\}$. This structure induces dependence among the elements of $\tilde{W}$.  
\end{description} 

Table \ref{tab:denyV} presents the selection frequency of $V$ when combined with $\tilde{W}$. Across all settings, the selection frequency of $V$ remains low, indicating that the detection procedure effectively rejects spurious auxiliary variables. In Case 1, selection frequency increases moderately with $\rho$, reflecting the effect of induced correlation, but remains limited overall. 

\begin{table}[!htbp]                                     
	\centering
	\caption{Selection frequency of an irrelevant variable $V$}
	\begin{tabular}{ccccc}
		\toprule
		& $\rho$   & Lasso & SCAD  & MCP \\
		\midrule
		\multirow{3}[2]{*}{Case 1} & 0.5   & 0.29  & 0.10  & 0.09 \\
		& 0.7   & 0.26  & 0.05  & 0.10 \\
		& 0.9   & 0.61  & 0.28  & 0.20 \\
		\midrule
		Case 2 & 0     & 0.13  & 0.03  & 0.02 \\
		\midrule
		\multirow{3}[2]{*}{Case 3} & 0.5   & 0.16  & 0.05  & 0.04 \\
		& 0.7   & 0.27  & 0.07  & 0.08 \\
		& 0.9   & 0.30  & 0.14  & 0.14 \\
		\bottomrule
	\end{tabular}%
	\label{tab:denyV}%
\end{table}%

We then examine whether the detection procedure retains genuinely relevant auxiliary variables. We consider the data-generating process: 
\begin{align*}
	Y = X_1 \alpha_1 + V \alpha_2 + e,
\end{align*}
where $\alpha_1 = \alpha_2 = 0.9$, and the variance of $e$ is chosen to achieve $R^2 = 0.9$. Both $X_1$ and $V$ are independently drawn from $N(0,1)$, with $n=100$. We generate additional variables $W_{N} = (W_{N, 1}', \dots, W_{N,499})$ under three settings:
\begin{description}
	\item[Case 1] $W_{N,j} = \rho \cdot V + \sqrt{1-\rho^2} \cdot Z_j $ for $j = 1$, where $Z_j$ independently follows $N(0,1)$, $\rho \in \{0.5, 0.7, 0.9\}$, and $\rho = 0$ for $j > 1$. 
	\item[Case 2] The setting is the same as Case 1, except that $W_{N,j} = \rho \cdot V + \sqrt{1-\rho^2} \cdot Z_j $ for $j = 1,\dots, 5$ and $\rho = 0$ otherwise.
	\item[Case 3] $W_{N,j} \sim N(0,1)$ independently for all $j$.
\end{description}

We incorporate the new variables $W_{N}$ into the model and perform variable selection to examine the selection frequency of $V$. Additionally, if $V$ is not selected, we report the proportion of cases where the estimates for $\alpha_1$ exhibit opposite signs depending on the inclusion or exclusion of $V$. Table \ref{tab:favorV} reports the selection frequency of $V$. In contrast to the previous setting, $V$ is consistently selected across all cases, supporting the ability of the detection procedure to retain relevant variables when they are part of the true model. 
\begin{table}[!htbp]
	\centering
	\caption{Selection frequency of the auxiliary variable $V$}
	\begin{threeparttable}
		\begin{tabular}{ccccccccc}
			\toprule
			& \multirow{2}[2]{*}{$k$} & \multirow{2}[2]{*}{$\rho$} & \multicolumn{2}{c}{Lasso} & \multicolumn{2}{c}{SCAD } & \multicolumn{2}{c}{MCP } \\
			\cmidrule(lr){4-5} \cmidrule(lr){6-7} \cmidrule(lr){8-9}
			&       &       &
			\multicolumn{1}{l}{Sel.V} & \multicolumn{1}{l}{Prop.OS} & \multicolumn{1}{l}{Sel.V} & \multicolumn{1}{l}{Prop.OS} & \multicolumn{1}{l}{Sel.V} & \multicolumn{1}{l}{Prop.OS} \\
			\midrule
			\multirow{3}[2]{*}{Case 1} & \multirow{3}[1]{*}{1} & 0.5   & 1     & 0     & 1     & 0     & 1     & 0 \\
			&       & 0.7   & 1     & 0     & 1     & 0     & 1     & 0 \\
			&       & 0.9   & 1     & 0     & 1     & 0     & 1     & 0 \\
			\hline
			\multirow{3}[2]{*}{Case 2}& \multirow{3}[1]{*}{5} & 0.5   & 1     & 0     & 1     & 0     & 1     & 0 \\
			&       & 0.7   & 1     & 0     & 1     & 0     & 1     & 0 \\
			&       & 0.9   & 1     & 0     & 1     & 0     & 1     & 0 \\
			\midrule
			Case 3    & -     & -     & 1     & 0     & 1     & 0     & 1     & 0 \\
			\bottomrule
		\end{tabular}%
		\vspace{1ex}
		{\raggedright Note: $k$ represents the number of $W_{N,j}$'s that have correlation with $V$. Sel.V is the selection frequency of variable $V$. Prop.OS denotes the proportion of cases in which the estimate of $\alpha_1$ has opposite sign depending on the inclusion or exclusion of $V$. \par}
	\end{threeparttable}
	\label{tab:favorV}%
\end{table}%

\section{Body fat dataset} \label{app:bodyfat}
The body fat dataset contains records for 252 men, including their percentage of body fat, age, weight, height and ten additional body circumference measurements. The objective is to examine the relationship between body fat, as an indicator of health, and the other measurements. A notable feature of this dataset is the high correlation among the variables. The original dataset is available on the author's website\footnote{\url{http://jse.amstat.org/v4n1/datasets.johnson.html}} or can be accessed through the R package \texttt{"mfp"}. 

To ensure data quality, we exclude case 39 as an outlier \citep{royston2007improving} and case 42 due to apparent recording issues, resulting in a final sample size of 250. All analyses are conducted using R software. CV-based Lasso results are obtained using the \textit{cv.glmnet} function from the \texttt{"glmnet"} package. CV-based SCAD and MCP results are obtained using the \textit{cv.ncvreg} function from the \texttt{"ncvreg"} package. For each method, we perform 100 iterations, with each iteration yielding the final result using the CV-determined tuning parameter. The final report is compiled by combining the results from these 100 iterations.  

The dependent variable in the analysis is SIRI, while the 13 independent variables are age, weight, height, neck, chest, abdomen, hip, thigh, knee, ankle, biceps, forearm and wrist. Definitions of the variables are provided in Table \ref{tab:varname} and summary statistics for these variables are presented in Table \ref{tab:sumstats}. 
\begin{table}[htbp]
	\centering
	\caption{Definition of variables used in body fat dataset}
	\begin{tabular}{ll}
		\toprule
		Var. & Definition \\
		\midrule
		SIRI  & Percent body fat using Brozek's equation:495/Density - 450 \\
		age   & Age (years) \\
		weight & Weight (lbs) \\
		height & Height (inches) \\
		neck  & Neck circumferences (cm) \\
		chest & Chest circumference (cm) \\
		abdomen & Abdomen circumference (cm) \\
		hip   & Hip circumference (cm) \\
		thigh & Thigh circumference (cm) \\
		knee  & Knee circumference (cm) \\
		ankle & Ankle circumference (cm) \\
		biceps & Biceps circumference (cm) \\
		forearm & Forearm circumference (cm) \\
		wrist & Wrist circumference (cm) \\
		\bottomrule
	\end{tabular}%
	\label{tab:varname}%
\end{table}%

\begin{table}[htbp]
	\centering
	\caption{Summary statistics of variables used in body fat. Q25 and Q75 denote the 25\% and 75\% quantile. }
	\begin{adjustbox}{center}
		\setlength{\tabcolsep}{0.9mm}{
			\begin{tabular}{lllllllllllllll}
				\toprule
				& SIRI  & age   & weight & height & neck  & chest & abdomen & hip   & thigh & knee  & ankle & biceps & forearm & wrist \\
				\midrule
				min   & 0.0   & 22.0  & 118.5 & 64.0  & 31.1  & 79.3  & 69.4  & 85.0  & 47.2  & 33.0  & 19.1  & 24.8  & 21.0  & 15.8 \\
				Q25   & 12.4  & 35.3  & 158.5 & 68.3  & 36.4  & 94.3  & 84.5  & 95.5  & 56.0  & 36.9  & 22.0  & 30.2  & 27.3  & 17.6 \\
				median & 19.2  & 43.0  & 176.1 & 70.0  & 38.0  & 99.6  & 90.9  & 99.3  & 59.0  & 38.5  & 22.8  & 32.0  & 28.7  & 18.3 \\
				mean  & 19.0  & 44.9  & 178.1 & 70.3  & 37.9  & 100.7 & 92.3  & 99.7  & 59.2  & 38.5  & 23.1  & 32.2  & 28.7  & 18.2 \\
				Q75   & 25.2  & 54.0  & 196.8 & 72.3  & 39.4  & 105.3 & 99.2  & 103.2 & 62.3  & 39.9  & 24.0  & 34.3  & 30.0  & 18.8 \\
				max   & 47.5  & 81.0  & 262.8 & 77.8  & 43.9  & 128.3 & 126.2 & 125.6 & 74.4  & 46.0  & 33.9  & 39.1  & 34.9  & 21.4 \\
				\bottomrule
			\end{tabular}%
		}
	\end{adjustbox}
	
	\label{tab:sumstats}%
\end{table}%

\end{document}